\newcommand{\reffig}[1]{Fig. \ref{#1}}
\renewcommand\arraystretch{1.5}
\newcounter{ct}
\newtheorem{definition}{Definition}
\newtheorem{proposition}{Proposition}
\begin{document}
	\title{{
	Millimeter Wave Wireless Communication Assisted Three-Dimensional  Simultaneous Localization and Mapping
	}}
	
		\author{
		Zhiyu Mou, and Feifei Gao
		
		
		\thanks{
			{Z. Mou, F. Gao are with the Institute for Artificial Intelligence, Tsinghua University (THUAI), State Key Lab of Intelligence Technologies and Systems, Tsinghua University, Beijing National Research Center for Information Science and Technology (BNRist), Department of Automation, School of Information Science and Technology, Tsinghua University, Beijing 100084, China (email: mouzy20@mails.tsinghua.edu.cn, feifeigao@ieee.org). }}
	}
	\maketitle	
	\date{}
\begin{abstract}  
	In this paper, we study the three-dimensional (3D) simultaneous localization and mapping (SLAM) problem in complex outdoor and indoor environments based only on millimeter-wave (mmWave) wireless communication signals. 
	Firstly, we propose a deep-learning based mapping (DLM) algorithm 
	that can leverage the reflections point on the first-order none line-of-sight (NLOS) communications links (CLs) to build the 3D point cloud map of the environment. 
	Specifically, we design a classification neural network to identify the first-order NLOS CL and theoretically calculate the geometric coordinates of the reflection points on it. 
	Secondly, we take the advantage of both the inertial measurement unit and the beam-squint assisted localization method to realize real-time and precise localizations.
	Then, combining the DLM and the adopted localization algorithm, we develop the communication-based SLAM (C-SLAM) framework that can carry out SLAM without any prior knowledge on the environment. Moreover, extensive simulations on both complex outdoor and indoor environments validate the effectiveness of our approach.
	\end{abstract}
	\begin{IEEEkeywords}
	communication-based SLAM, ISAC, mmWave, 3D point cloud
	\end{IEEEkeywords}

\section{Introduction}
A widely recognized vision for the next generation wireless  communication system, such as beyond 5G (B5G) or 6G networks, is to be combined with sensing systems, realizing efficient utilization wireless resources, wide environment sensing functions, and even pursuing mutual benefits \cite{isac_1, isac_2}. Therefore, integrated sensing and communication (ISAC) is considered as one of the most important and promising technologies in the future communication systems and has attracted increasingly research interest recently \cite{slam, isac_3, ny, bs_l}. Many ISAC technologies have investigated how to improve the efficiency and quality of data transmissions in wireless communication using sensing functions \cite{s_t_c_1, s_t_c_2}. 
For example, \cite{s_t_c_1} tries to select proactive base state and carry out optimal beam switching with the aid of multi-cameras (that are typical sensing equipments). In addition, \cite{s_t_c_2} infers the optimal beam pair for transceivers without any pilot signal overhead based on three-dimensional (3D) object detection techniques. The success of these works has proved to us that sensing can indeed help to improve the communication systems. 
Conversely,
communication signals also have the potential to enable sensing abilities \cite{bs_l, slam}, which has large application prospects in
user searching, emergency rescues, and many other scenarios.
In particular, in the harsh environment with severe dust, thick smoke or darkness that cause low visibilities, traditional optical sensing equipments are seriously affected or even fail. 
Compared with expensive radar systems, low-cost and easily available wireless communication equipments can be ideal tools for environmental sensing \cite{slam}. 

\begin{table}
	\centering
	\caption{Existing Localization methods as well as their advantages and disadvantages. }
	\label{table:localization}
	\renewcommand\arraystretch{1.7}
	\begin{tabular}{c|c|c}
		\hline
		\rowcolor[gray]{0.9}
		\small\textbf{{Positioning Methods}}&\small\textbf{{Advantages}}&\small\textbf{{Disadvantages}}\\
		\hline
		\small \makecell[c]{{Global Positioning}\\{ System (GPS)}}&\small
		\makecell[c]{{ability to correct} \\ { errors of other } \\{positioning systems}
		}& \small\makecell[c]{{
				(1) low update } \\{frequency} \\{(2)  low accuracy}} \\
		\hline
		\small \makecell[c]{{
				Inertial Measurement}\\ { Unit (IMU)}} & \small \makecell[c]{{
				high update} \\ { frequency}}& \small \makecell[c]{{
				accumulative} \\ { error }}\\
		\hline
		\small \makecell[c]{{LIDAR with }\\ {HD map} } & \small \makecell[c]{{
				(1) high precision} \\ {(2) high real-time}}& \small \makecell[c]{{ relies on the } \\ {the HD map }}  \\
		\hline
		\small \makecell[c]{{camera with} \\ {the HD map}} & \small high real-time & \small \makecell[c]{{easily affected} \\ { by  lights, etc. }}\\
		\hline
		
	\end{tabular}
\end{table}

Simultaneous localization and mapping (SLAM) is an important and widely studied sensing technology in the field of robotics and autonomous driving \cite{v-slam, l-slam}.  Traditional SLAM algorithms usually leverage LIDAR \cite{l-slam} or computer vision methods \cite{v-slam}. Nonetheless, as a typical ISAC technique that enables sensing functions only with wireless signals, 
\emph{communication-based SLAM} (C-SLAM) has been studied in some recent literatures \cite{slam, slam_bp, slam_bp_1, slam_bp_2, ny, slam_3, slam_4}.
Specifically, in terms of mapping, 
some researchers propose belief propagation (BP)-based SLAM algorithms to detect the physical anchors (PAs) and virtual anchors (VAs) that can represent specific boundaries in simple indoor environments \cite{slam, slam_bp, slam_bp_1, slam_bp_2}. However, the anchors determined by the BP-based SLAM algorithms can only reconstruct simple environments. For complex ones, BP-based SLAM algorithms can only roughly find the boundaries and miss the details of the environment. The authors in \cite{slam_4} leverage mmWave imaging to construct a high definition (HD) 3D image of the line-of-sight (LOS) and non-line-of-sight (NLOS) objects in the environment with one antenna array. However, similar to the BP-based algorithm, \cite{slam_4} only obtains the image of simple objects, i.e., two walls. More importantly, the algorithms in  \cite{slam, slam_bp, slam_bp_1, slam_bp_2, slam_4} can only applied to two-dimensional (2D) environment and fail to capture the details of 3D real-world objects. 
Other researchers focus on designing the hybrid mapping system that requires the use of cameras, which may fail in harsh environments \cite{ny}. Hence, new algorithms should be designed to solve the mapping problem of complex 3D environments with only communication signals. 
As for localization, many mature techniques have been used in realities, whose advantages and disadvantages are summarized in Table. \ref{table:localization}. Specifically, the global positioning system (GPS) \cite{gps} has the ability to correct estimation errors of other localization systems, but usually has low update frequency and accuracy.
Although the inertial measurement unit (IMU) \cite{imu} has high update frequency, it suffers from the accumulative estimation error. Note that the combination of the GPS and the IMU is also not feasible, since the  GPS cannot meet the accuracy requirement for the localizations in C-SLAM algorithms. Other techniques, such as LIDAR and computer vision-based localization algorithms, all need the HD map of the environment in advance. However, the HD map or any other information on the environment may not be available. Hence, new localization method needs to be devised in the C-SLAM algorithms. 
\begin{figure}
	\centering
	\includegraphics[width=80mm]{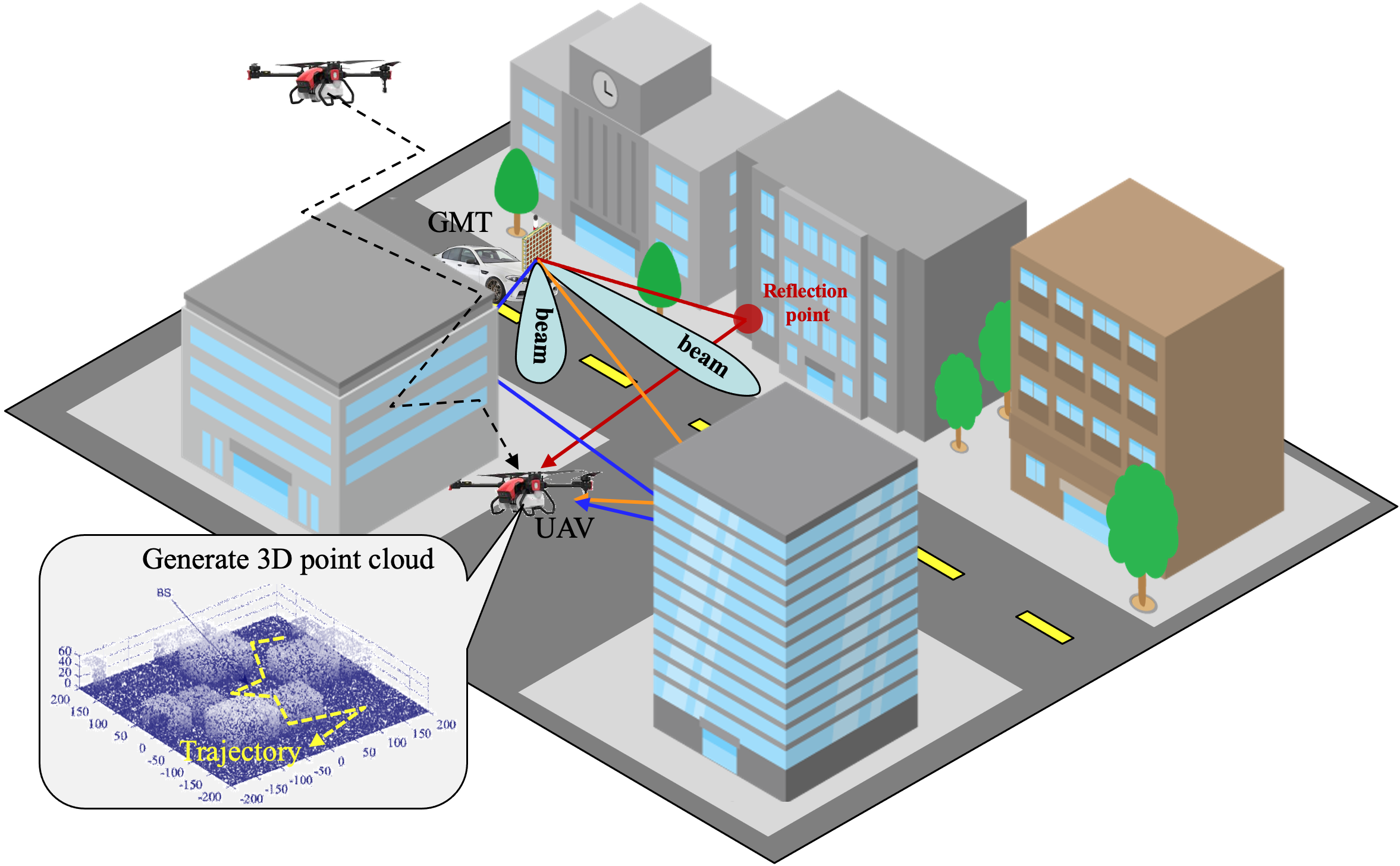}
	\caption{The UAV aims to build the 3D point cloud of the surrounding urban outdoor environment and locates itself only based on the wireless signals received from the GMT.}
	\label{fig:system_model}
\end{figure}


In this paper, we study the 3D SLAM problem in complex outdoor environments based only on millimeter-wave (mmWave) wireless communication signals. 
Firstly, we propose a deep-learning based mapping (DLM) algorithm to build the 3D point cloud map of the environment. The basic idea of the DLM is to solve the coordinates of the reflections point on the first-order none line-of-sight (NLOS) communications links (CLs) and use them as the points in the 3D point cloud map of the environment. To achieve this, the DLM adopts the low-rank tensor decomposition (LTD) method to estimate the channel state information (CSI) of the CLs and designs a link state classification neural network (LSCN) to identify the first-order NLOS CLs. 
Based on the estimation and classification results, the DLM theoretically calculates the geometric coordinates of the reflection points. 
Secondly, 
we take the advantage of both the IMU and the beam-squint assisted localization (BSM) method \cite{bs_l}
to realize real-time and precise localizations. The combined
Moreover, combining the DLM and the adopted localization algorithm, we develop the communication-based SLAM (C-SLAM) framework that can carry out SLAM without any prior knowledge on the environment. Moreover, extensive simulations on both complex outdoor and indoor environments validate the effectiveness of our approach.

The rest parts of this paper are organized as follows.
Section \ref{sec:system_model} presents the system models of the C-SLAM problem. Section \ref{sec:algorithm} describes the proposed C-SLAM framework. Simulation results are presented in Section \ref{sec:simulations},  and conclusions are made
in Section \ref{sec:conclusion}.

\emph{Notations:}  $x$, $\mathbf{x}$, $\mathbf{X}$ represent a scalar $x$, a vector $\mathbf x$ and a matrix $\mathbf X$, respectively; $\sum$ denotes the sum operator; $\left\|\cdot\right\|_2$ denote the norm-2 operation of matrices; $\cup$ represents the union operator  between sets; $\leftarrow$ represents the right-to-left assignment notation;
$^*$ denotes the complex conjugate operator;
$\neq$ means that the left and right values are not equal.
$\|$ is the concatenation operation; 
$\{\cdot\}$ denotes a set, and $|\{\cdot\}|$ calculates the size of the set;
$\mathbb{R}^+$ represents the set of positive real numbers, while $\mathbb{N}_+$ denotes the set of positive integers;
$\in$ represents the element on the left belongs to the set on the right, respectively;  in addition, $\mathbb{R}^n$ represents the set of vectors composed of $n$ real number;
moreover, $\equiv$ denotes the identity sign, and $<\mathbf{a}, \mathbf{b}>$ represents the inner product between vectors $\mathbf{a}$ and $\mathbf{b}$.

\section{System Models}
\label{sec:system_model}

As shown in Fig. \ref{fig:system_model}, we consider an unmanned aerial vehicle (UAV) flying in an urban outdoor environment within $T\in\mathbb{N}_+$ time steps and, as the receiver, constantly receiving wireless signals from a ground mobile transmitter (GMT).
The GMT is assumed to have access to its precise position at each time step, while the UAV is only equipped with an IMU that can roughly estimate its position.
Establish a fixed Cartesian coordinate system $X$-$Y$-$Z$, named as the world coordinate system, and let the real position of the UAV and the GMT at time step $t$ be $\mathbf{p}_{R,t}=[x_{R,t},y_{R,t},z_{R,t}]^T$ and $\mathbf{p}_{G,t}=[x_{G,t},y_{G,t},z_{G,t}]^T$, respectively, where $x_{R,t}, x_{G,t}$, $y_{R,t}, y_{G,t}$ and $z_{R,t}, z_{G,t}$ represent the $X$, $Y$ and $Z$ axis components of the UAV and the GMT, respectively, $t\in\{0,1,...,T\}$. Note that as the GMT can only move on the ground, its $Z$ axis component is fixed, i.e., $z_{G,t}\equiv h_G, \forall t$, where $h_G\in\mathbb{R}^+$ is a constant. 
The communication system between the GMT and the UAV works in the mmWave frequency band whose large bandwidth characteristic brings huge gains to the transmission rates. However, mmWave signals are easily attenuated in the atmosphere due to climate factors, which largely limits its coverage range \cite{mmwave}.  Hence, beamforming technique \cite{mmWaveBF, HBF} is used to realize the long-distance and high-quality communications between the GMT and the UAV. Moreover, to provide full angle coverage, the GMT and the UAV are both equipped with multiple arrays \cite{ny}. 
The GMT transmits its position along with other necessary information to the UAV through the wireless signals at each time step $t$ and moves to the next position every $T_c\in\mathbb{N}_+$ time steps. 
The UAV receives the signals from the GMT at each time step $t$ and moves to the next position in the next time step $t+1$. 
Let the signal channel be a standard multi-path cluster model \cite{cl}. 
Then, at each time step $t$, the wireless signals are propagated from the GMT to the UAV through several communication links (CLs). The CLs are determined by the environment and, hence,  contain rich information on the objects in the environment.
\subsection{Array and Signal Models}
Let the number of arrays at the UAV and the GMT be $N_\text{U}\in\mathbb{N}_+$ and $N_\text{G}\in\mathbb{N}_+$, respectively. Denote the angle of arrival (AOA) signals at the UAV side as $\Omega^u=(\phi^u,\theta^u)$, where $\phi^u$ and $\theta^u$ represent the azimuth and elevation angles, respectively. As in \cite{ny}, we define the spatial signature from the $j$-th array at the UAV as $\mathbf{v}^{(j)}_u(\Omega^u)$, $j\in\{1,2,...,N_U\}$.
The collection of all spatial signatures forms the array signature of the UAV, i.e.,  $\mathbf{v}_u(\Omega^u)=[\mathbf{v}^{(1)}_u(\Omega^u),...,\mathbf{v}^{(N_\text{U})}_u(\Omega^u)]$.
Note that  $\mathbf{v}^{(j)}_u(\Omega^u)$ includes both the array gain and element gain and is  normalized such that $||\mathbf{v}^{(j)}_u(\Omega^u)||^2$ is the directivity of the $j$-th array in the direction $\Omega^u$. 
Similarly, let the angle of departure (AOD) signals at the GMT side as $\Omega^g=(\phi^g,\theta^g)$, where $\phi^g$ and $\theta^g$ represent the azimuth and elevation angle, respectively. We define the spatial signature from the $k$-th array at the GMT as $\mathbf{v}^{(k)}_u(\Omega^u)$, $k\in\{1,2,...,N_U\}$. Then the collection of all spatial signatures forms the array signature of the GMT, i.e.,  $\mathbf{v}_g(\Omega^g)=[\mathbf{v}^{(1)}_g(\Omega^g),...,\mathbf{v}^{(N_\text{G})}_g(\Omega^g)]$.
Moreover, we define one-hot vectors $\tilde{\mathbf{v}}_u$ and $\tilde{\mathbf{v}}_g$ to represent the response on the arrays at the UAV and the GMT, respectively, i.e., $	\tilde{\mathbf{v}}_u(\Omega^u)=\big[0,...,0,\mathbf{v}^{(j')}_u(\Omega^u),0,...,0\big]$ and $	\tilde{\mathbf{v}}_g(\Omega^g)=\big[0,...,0,\mathbf{v}^{(k')}_g(\Omega^g),0,...,0\big]$, where $j'=\arg\max_{j}||\mathbf{v}^{(j)}_u(\Omega^u)||^2$,  and $k'=\arg\max_{k}||\mathbf{v}^{(k)}_g(\Omega^g)||^2$. Let $N_\text{G}^\text{dir}\in\mathbb{N}_+$ be the total number of beamforming directions at the GMT, and denote $\mathbf{w}^\text{g}_m$ as the beamforming vector for direction $m$ at the GMT, $m\in\{1,2,...,N^\text{dir}_\text{G}\}$. Note that $\mathbf{w}^\text{g}_m$ is called a \emph{codeword}, and the set of all codewords $\{\mathbf{w}^\text{g}_m, \forall m\}$ forms a \emph{codebook}. Similar to \cite{ny}, we design the codeword as 
$\mathbf{w}^\text{g}_m=\tilde{\mathbf{v}}_g(\tilde{\Omega}^g_m)/||\tilde{\mathbf{v}}_g(\tilde{\Omega}^g_m)||$, where $\tilde{\Omega}^g_m$ represents the departure angle in the beamforming under the codeword $\mathbf{w}^\text{g}_m$. The set of the departure angles in the beamforming then will be $\{\tilde{\Omega}^g_m, \forall m\}$. 
When the GMT conducts the beamforming of the codeword $\mathbf{w}^\text{g}_m$, the received signal of the UAV, $\mathbf{r}_m(t)$, can be represented as 
\begin{align}
	\mathbf{r}_m(t)=\sum_{l=1}^Lg_l\mathbf{v}_u(\Omega^u_l)\mathbf{v}_g(\Omega^g_l)^T\mathbf{w}^\text{g}_m{x}_m(t-\tau_l) + \mathbf{n}_m(t),
\end{align}
where $L$ is the total number of CLs, $g_l$ is the complex path gain of the $l$-th CL, $\Omega^u_l$ and $\Omega^g_l$ are the AOA and AOD of the $l$-th CL at the UAV and the GMT sides, respectively, $x_m(t)$ represents the complex
baseband synchronization signal transmitted in the $m$-th direction at the GMT side, $\tau_l$ is the time delay of the $l$-th CL, and $\mathbf{n}_m(t)$ represents the additive white Gaussian noise (AWGN). The CLs can be categorized into three types, including (1) the \emph{line-of-right} (LOS) CL where the signals propagate directly from the GMT to the UAV, (2) the first-order \emph{none line-of-sight} (NLOS) CL where the signals propagate from the GMT to the UAV through only one reflection, and (3) the high-order NLOS CL where more than one reflections occur during the propagations from the GMT to the UAV.

 \begin{figure*}
 	\centering
 	\includegraphics[width=170mm]{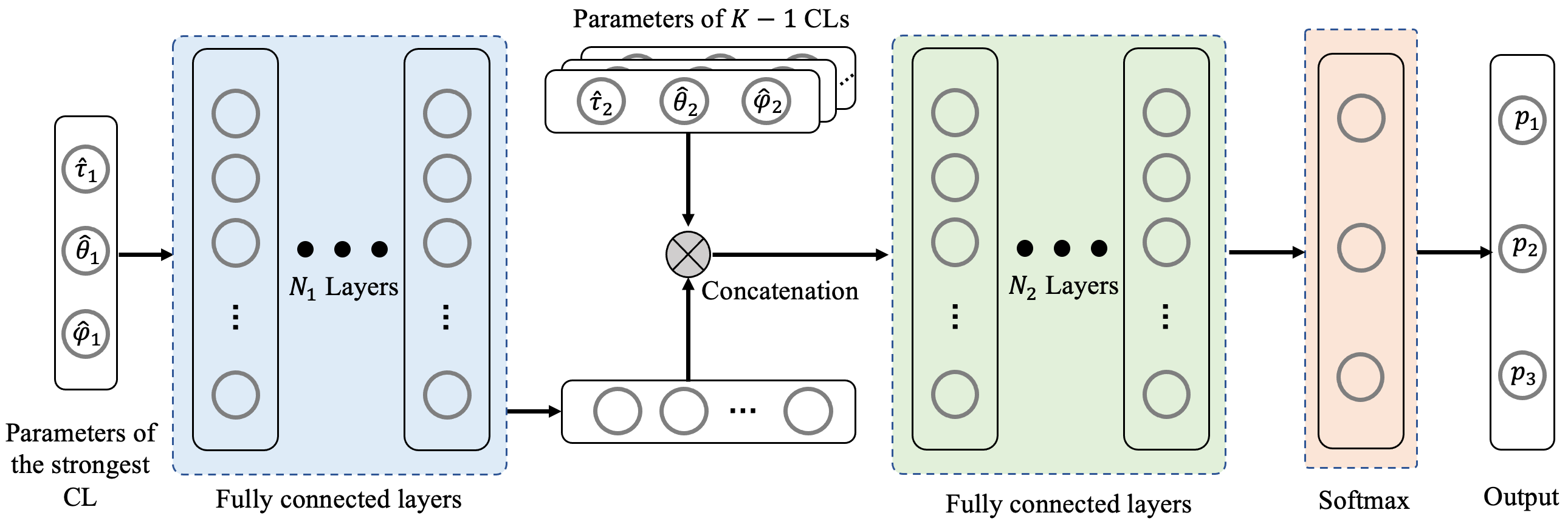}
 	\caption{The structure of the link state classification neural network (LSCN).}
 	\label{fig:classification_nn}
 \end{figure*}
\subsection{Problem Settings}
The goal of the considered SLAM problem is to build the 3D point cloud map $\mathcal{M}$ of the ambient urban outdoor environment at the UAV side during its flying based on the wireless communications with the GMT. 
As mapping needs the knowledge of the UAV's position at every time step, the UAV also needs to precisely locate itself during flying with the wireless signals and the aid of the IMU.
\section{Algorithms}
\label{sec:algorithm}
\begin{algorithm}[t]
	\normalsize\caption{{C-SLAM}}
	\label{algorithm:c-slam}
	\setstretch{1} 
	{\bf Inputs:} The position correction interval $T_c$, total time step $T$.\\
	{\bf Outputs:} The position estimations of the UAV, $\widehat{\mathbf{p}}_{R,t}, t\in\{0,1,...,T\}$ and the point cloud of the environment $\mathcal{M}$.\\
	{\bf Initialize:} The position of the GMT $\mathbf{p}_{G,0}$, the UAV corrects its position, i.e., $\widehat{\mathbf{p}}_{R,0}\leftarrow\widehat{\mathbf{p}}^\text{bs}_{R,0}$, 
	the empty point cloud $\mathcal{M}$, 
	the counter $n\leftarrow 0$.
	\begin{algorithmic}[1]
		\normalsize 
		\For{$t=0$ to $T$}
		\State The GMT transmits communication signals to the UAV, including its position $\mathbf{p}_{G,t}$.
		\State The UAV utilizes LTD to obtain path estimations of the top $K$ strongest CLs, $\{(\widehat{\tau}_l,\widehat{\theta}_l, \widehat{\phi}_l, \gamma_l), l=1,...,K\}$.
		\State The UAV classifies the path state with the LSCN.
		\If {the path state is the first-order NLOS}
		\State The UAV calculates the coordinate of the reflection point $[x_P, y_P, z_P]^T$ with \eqref{equ:geores_1}, \eqref{equ:geores_2} and \eqref{equ:geores_3}.
		\State $\mathcal{M}\leftarrow\mathcal{M}\cup\{[x_P, y_P, z_P]^T\}$.
		\EndIf
		\State The UAV moves to the next position $\mathbf{p}_{R,t+1}$.
		\If{$(t+1)\%T_c =0$}
		\State $n \leftarrow n + 1$.
		\State The GMT moves to the next position $\mathbf{p}_{G,t+1}$.
		\State The UAV corrects its current position estimation with the BSM, i.e., $\widehat{\mathbf{p}}_{R,t+1}\leftarrow\widehat{\mathbf{p}}^\text{bs}_{R,n}$.
		\Else
		\State 	The UAV uses the IMU to estimate its current position, i.e.,  $\widehat{\mathbf{p}}_{R,t+1}\leftarrow \widehat{\mathbf{p}}^\text{bs}_{R,n}+\sum_{q=(n-1)T_c}^{t}(\widehat{\mathbf{p}}^\text{imu}_{R,q+1}-\widehat{\mathbf{p}}^\text{imu}_{R,q})$.
		\EndIf
		\EndFor
	\end{algorithmic}
\end{algorithm}
The C-SLAM framework is composed of two algorithms, including a hybrid periodic positioning calibration (HPC) algorithm for localization, and the DLM for mapping. 
\subsection{Hybrid Periodic Position Calibration Algorithm}
We combine the IMU and the BSM \cite{bs_l} for localization of the UAV, which forms the HPC algorithm.
Notably, the BSM can help to correct the accumulative error on the position estimations caused by the IMU. Hence, the HPC can realize real-time and precise localization with only the communication signals and the aid of the IMU. We next briefly introduce the working principles of the IMU and the BSM, respectively.
\begin{itemize}
	\item IMU: The main components of the IMU are gyroscope, accelerometer and magnetometer. The gyroscope can obtain the acceleration of each axis, the accelerometer can obtain the acceleration in the $X$, $Y$, and $Z$ directions, and the magnetometer can obtain the information of the surrounding magnetic field. The IMU fuses the data 
	of the these three sensors to calculate the positions.
	\item BSM: In wideband communications, the beamforming of the subcarriers may not point to the target position. Such a phenomenon is named as the \emph{beam squint} as the beamforming direction gradually “squint” over the frequency. Note that
	with the aid of the TDs, the range and trajectory of the beam
	squint can be freely controlled, and hence it is possible to
	reversely utilize the beam squint for localizations \cite{bs_l}. The BSM designs a way to control the trajectory of the beam
	squint points. With the design, beamforming from
	different subcarriers would purposely point to different angles
	and different distances such that the UAV from different positions
	would receive the maximum power at different subcarriers.
	Hence, the position of the UAV can be determined from
	the beam squint effect.
\end{itemize}
The HPC works in an periodic manner, correcting the estimations of the IMU with the BSM every $T_c\in\mathbb{N}_+$ time steps. Specifically, let the estimated position of the UAV generated by the IMU at time step $t$ be $\widehat{\mathbf{p}}^\text{imu}_{R,t}$,  and let the $n$-th estimation on the position of the UAV generated by the BSM be $\widehat{\mathbf{p}}^\text{bs}_{R,n}$, $n\in \mathbb{N}$. Then, when the time step $t$ is exactly an integer multiple of $T_c$, i.e., $t \% T_c =0$, the UAV corrects its current position estimation with the BSM, i.e.,
\begin{align}
	\label{equ:bsm}
	\widehat{\mathbf{p}}_{R,t}\leftarrow\widehat{\mathbf{p}}^\text{bs}_{R,n},
\end{align}
where $n = t/T_c$. When the time step $t$ is not exactly an integer multiple of $T_c$, i.e., $t \% T_c \neq 0$, the UAV estimates its position based on the position variation calculated by the IMU from the latest position estimation given by the BSM, i.e., 
\begin{align}
	\label{equ:imu}
	\widehat{\mathbf{p}}_{R,t}\leftarrow \underbrace{\widehat{\mathbf{p}}^\text{bs}_{R,n}}_{\text{latest correction} }+\underbrace{\sum_{q=(n-1)T_c}^{t-1}(\widehat{\mathbf{p}}^\text{imu}_{R,q+1}-\widehat{\mathbf{p}}^\text{imu}_{R,q})}_{\text{position variations given by the IMU}}.
\end{align}

\subsection{Deep Learning Based Mapping Algorithm}
Our basic idea of mapping is to find the coordinates of the reflection points of the CLs on the surface of objects in the environment and use these reflection points to generate the 3D point cloud map of the environment.
Recall that there are three types of the CL, including the LOS CL, the first-order NLOS CL and the higher-order NLOS CL. The LOS CL has no interactions with the environment, and hence, cannot be utilized to generate the point clouds. Although the higher-order NLOS CLs have multiple reflection points, we prove that their coordinates cannot be solved theoretically.
\begin{proposition}
	The coordinates of reflections points on the higher-order NLOS CL have infinite possible solutions and cannot be uniquely determined.
\end{proposition}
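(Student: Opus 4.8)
The plan is to convert the per-link measurements into geometric constraints on the reflection points and then show, by a degree-of-freedom count, that these constraints are strictly underdetermined as soon as the link involves two or more reflections. For a single CL $l$ the UAV recovers only three pieces of information: the AOD $\Omega^g_l$ at the GMT, the AOA $\Omega^u_l$ at the UAV, and the delay $\tau_l$. Geometrically, the AOD fixes the direction of the \emph{first} hop leaving the GMT, the AOA fixes the direction of the \emph{last} hop arriving at the UAV, and the delay fixes only the \emph{total} traveled length $\ell=c\tau_l$, with $c$ the speed of light. Crucially, because the reflecting surfaces are unknown (indeed, recovering them is the mapping goal itself), no law-of-reflection constraint is available at any bounce, so the directions of all intermediate hops remain entirely free.

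Making this precise: suppose the link has order $k\ge 2$ with reflection points $\mathbf{q}_1,\dots,\mathbf{q}_k$, and write $\mathbf{g}=\mathbf{p}_{G,t}$, $\mathbf{r}=\mathbf{p}_{R,t}$. Let $\mathbf{u}_g$ and $\mathbf{u}_u$ be the unit vectors determined by $\Omega^g_l$ and by the reversed $\Omega^u_l$, respectively. The AOD/AOA constraints then read
\begin{align}
\mathbf{q}_1=\mathbf{g}+d_1\mathbf{u}_g,\qquad \mathbf{q}_k=\mathbf{r}+d_k\mathbf{u}_u,\qquad d_1,d_k>0,
\end{align}
so $\mathbf{q}_1$ and $\mathbf{q}_k$ each collapse to a single unknown scalar, while the interior points $\mathbf{q}_2,\dots,\mathbf{q}_{k-1}$ remain free in $\mathbb{R}^3$. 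The only scalar equation left is the total-length constraint
\begin{align}
\label{equ:lengthprop}
d_1+\sum_{i=1}^{k-1}\big\|\mathbf{q}_{i+1}-\mathbf{q}_i\big\|+d_k=\ell.
\end{align}
Counting unknowns gives $2+3(k-2)=3k-4$ free scalars subject to the single equation \eqref{equ:lengthprop}, leaving $3k-5\ge 1$ degrees of freedom for every $k\ge 2$. Hence the solution set is a positive-dimensional manifold, i.e., infinitely many configurations of reflection points are consistent with the measurements.

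To exhibit this concretely in the simplest case $k=2$, I would treat $d_1$ as a free parameter and solve \eqref{equ:lengthprop}, which becomes $d_1+\|\mathbf{g}+d_1\mathbf{u}_g-\mathbf{r}-d_2\mathbf{u}_u\|+d_2=\ell$, for $d_2$. The main obstacle is exactly here: I must verify that this single equation admits a \emph{continuum} of valid solutions rather than accidentally collapsing to isolated points, i.e., that for an open interval of admissible $d_1$ there is a corresponding $d_2>0$ closing the broken path $\mathbf{g}\to\mathbf{q}_1\to\mathbf{q}_2\to\mathbf{r}$ with total length exactly $\ell$. I expect this to follow from a continuity and triangle-inequality feasibility argument: for fixed $d_1$ the attainable total length is a continuous function of $d_2$ that grows without bound and attains a finite minimum, so it surjects onto an interval containing $\ell$; since the true configuration solves the equation, a whole neighborhood of $d_1$ remains feasible, yielding a one-parameter family of distinct reflection-point pairs.

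For contrast, and to confirm the count is tight, the first-order case $k=1$ forces $\mathbf{q}_1$ to lie simultaneously on the AOD ray from $\mathbf{g}$ and the reversed-AOA ray from $\mathbf{r}$. Physical consistency of the measurements guarantees these two rays are coplanar, and being non-parallel in the generic case they meet in a unique point, with the delay acting merely as a consistency check. This is precisely why the first-order reflection point is solvable via the geometric relations used in the mapping algorithm, whereas the higher-order ones are not.
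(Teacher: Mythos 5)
Your proof is correct in substance, and it shares the paper's overall skeleton: turn the per-link measurements into a constraint system on the unknown reflection points, show the system is strictly underdetermined once the order is at least two, and then certify that the deficit really corresponds to a continuum of solutions. The executions, however, differ in two substantive ways. First, the paper keeps the unknown reflecting surfaces in the system: each bounce contributes a unit normal $\mathbf{m}_i$ (three unknowns) together with three scalar equations (reflection-angle equality, bisector condition, unit norm), so the normals net out and the count becomes $6N$ unknowns against $3N+3$ equations; you instead discard the reflection law as vacuous (any broken path admits normals realizing it) and, in exchange, grant yourself the AOD ray at the GMT, arriving at $3k-4$ unknowns against a single length equation. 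Be aware that the paper's LTD front end estimates only the delay and the AOA, $\{\widehat{\tau}_l,\widehat{\Omega}^u_l\}$, so the AOD is not actually among the available measurements; since adding the AOD constraint only shrinks the solution set, your conclusion implies the proposition a fortiori, but your closing remark that the first-order point is obtained by intersecting the AOD and AOA rays (with delay as a mere consistency check) does not match the paper's Proposition 2, which recovers that point from the AOA and the delay alone. Second, where the paper exhibits the continuum explicitly --- translating two parallel reflecting surfaces by an arbitrary distance $d>0$ yields a one-parameter family satisfying all constraints --- you argue it by continuity, perturbing $d_1$ and re-solving the length equation for $d_2$. You leave this step as a sketch, but it does close: for $d_1=d_1^*-\epsilon$ the triangle inequality gives total length at most $c\tau$ at $d_2=d_2^*$, the length grows without bound as $d_2\to\infty$, and the intermediate value theorem produces a valid $d_2$, so each $\epsilon$ yields a distinct configuration. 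Some such certificate is genuinely needed in both proofs, since the bare equations-versus-unknowns count does not by itself exclude a degenerate or isolated solution set; the paper's explicit translation family is the more constructive of the two, while your argument has the advantage of not requiring any special geometry (parallel surfaces) in the example.
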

\begin{proof}
	See Appendix \ref{app:proof_pro_1}.
\end{proof}
Therefore, we can only leverage the reflection points on the first-order NLOS CL to generate the 3D point cloud map.
However, before calculating the reflection points, we need to identify the first-order NLOS CL from all CLs. 
Note that identifying the types of all CLs is pretty difficult, if not impossible. Hence, we here manage to classify the type of \emph{the strongest CL} that is defined below instead of all CLs. 
\begin{definition}
	Sort the CLs by their signal noise ratio (SNR) measured at the UAV side from large to small, and we define the $l$-th strongest CL as the $l$-th CL after sorting. The strongest CL refers to the $1$-st strongest CL.
\end{definition}
Specifically, we first adopt the path decomposition method to find the \emph{channel state information} (CSI) of all CLs and design the LSCN to classify the CLs based on the CSI. Then we present the analytical results of the geometry calculation on the coordinates of the  reflection points on the first-order NLOS CL.
The path decomposition method, LSCN and the calculations of the reflection points coordinates construct the whole DLM algorithm and are described as follows, respectively.

\subsubsection{Path Decomposition Method}
Path decomposition methods have been studied in many literatures \cite{ny, pe_1, pe_2}. We here adopt a low-rank tensor decomposition (LTD) algorithm \cite{ny} to estimate the CSI of all the CLs. Specifically, define a spatial and temporal correlation factor
\begin{align}
	\rho_m(\tau,\Omega^u)\triangleq\int \tilde{\mathbf{v}}_{u}(\Omega^{u})*\mathbf{r}_m(t)x^*_m(t-\tau)dt,
\end{align}
where $\tau$ denote the time delay. 
The magnitude of $\rho_m(\tau,\Omega^u)$ has peaks when the input values are exactly the parameters of the $l$-th CL, i.e., $\tau=\tau_l$ and $\Omega^u=\Omega_l^u, \forall l$.
The LTD algorithm can find these peaks, and thereby, determine the estimations of the CL parameters $\{\widehat{\tau}_l, \widehat{\Omega}^u_l \mid  \forall l\}$, where $\widehat{\tau}_l$, $\widehat{\Omega}^u_l$ and $\widehat{\gamma}_l$ denote the estimated time delay and AOA of the $l$-th CL. Note that the estimated AOA includes the estimations of the azimuth and elevation angles, i.e., $\widehat{\Omega}^u_l=(\widehat{\theta}^u_l,\widehat{\phi}^u_l), \forall l$. Moreover, we can also determine the signal noise ratio (SNR) of each CL, denoted as  $\gamma_l, \forall l$.

\begin{figure}
	\centering
	\includegraphics[width=80mm]{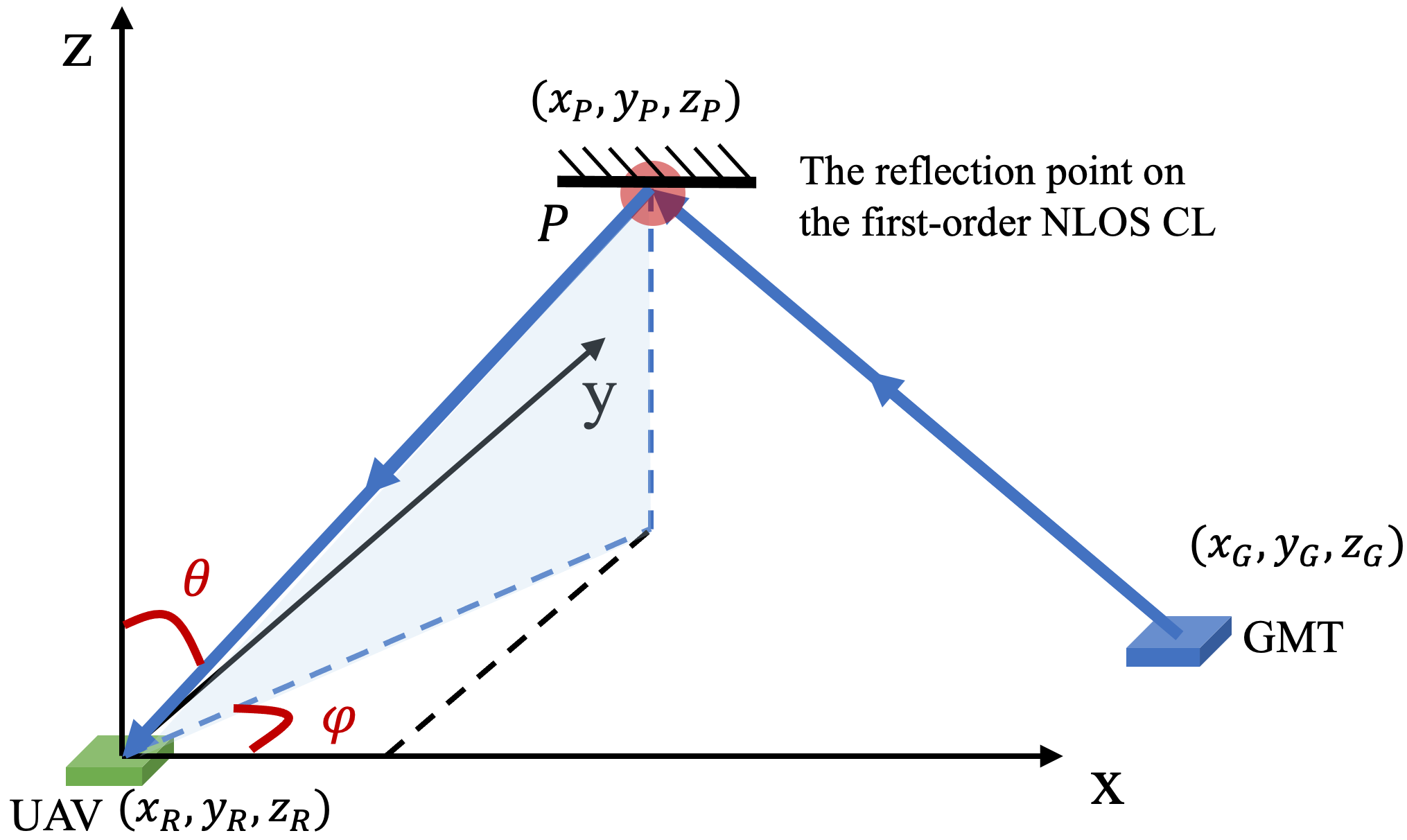}
	\caption{The geometry analysis of the first reflection path.}
	\label{fig:geometry}
\end{figure}
\subsubsection{Link State Classification Neural Network (LSCN)}
{At each time step, there exist several different types of CLs between the GMT and the UAV. Notably, classifying the type of the strongest CL is feasible.  The reasons are as follows. Label the LOS CL, the first-order NLOS CL and the higher-order NLOS CL as $1$, $2$, and $3$. We claim that the strongest CL cannot have a higher label than all other CLs  at each time step.
Hence, the strongest CL can represent the main features of the relative positions between the GMT and the UAV. For example, if the strongest CL is the first-order NLOS CL with label $2$, then there is no LOS CL with label $1$ in the remaining CLs, and we know there are obstructs in the line between the GMT and the UAV. However, other CLs do not have such functions since there might be other CLs with type of smaller labels. Besides, the main features of the relative positions between the GMT and the UAV can be learned by training with large data generated in the environment. Therefore, identifying the type of the strongest CL is feasible. For convenience, we define the following three \emph{link states}: 
	\begin{itemize}
		\item \emph{LOS:} the strongest CL is LOS CL;
		\item \emph{first-order NLOS:} the strongest CL is the first-order NLOS CL;
		\item \emph{higher-order NLOS:} the strongest CL is the higher-order NLOS CL.
	\end{itemize}
To classify the link state at each time step, we design a LSCN as shown in Fig. \ref{fig:classification_nn}.
The LSCN takes the estimated parameters of the top $K\in\mathbb{N}_+$ strongest CLs as inputs and generates the probabilities of being three link states. Specifically, the estimated parameters of the strongest CL $(\widehat{\tau}_1, \widehat{\theta}_1, \widehat{\phi}_1)$ are processed by $N_1\in\mathbb{N}_+$ fully connected layers. The resulted vector is concatenated with the estimated parameters of the remaining $K-1$ CLs and processed by $N_2\in\mathbb{N}_+$ fully connected layers. 
Note that we design the number of neurons in the last layer of the second fully connected neural network as $3$ since there are three types of link states.}
For convenience, we denote the functions of the neural networks with $N_1$ and $N_2$ fully connected layers as $F_1(\cdot)$ and $F_2(\cdot)$, respectively.
Then the resulted vector $\mathbf{y}\in\mathbb{R}^3$ can be expressed as 
\begin{align}
	\mathbf{y}=F_2\bigg(\bigg\{\widehat{\tau}_l,\widehat{\theta}_l, \widehat{\phi}_l \mid l=2,...,K\bigg\} \| F_1(\widehat{\tau}_1,\widehat{\theta}_1, \widehat{\phi}_1); \mathbf{\Psi}\bigg),
\end{align}
where $\mathbf{\Psi}$ denotes all the trainable parameters in $F_1(\cdot)$ and $F_2(\cdot)$. The resulted vector $\mathbf{y}$ is processed by the softmax function, and the output of the LSCN is $\widehat{\mathbf{p}} \triangleq [\widehat{p}_1,\widehat{p}_2,\widehat{p}_3]=\text{softmax}(\mathbf{y})$,
where $\widehat{p}_1$, $\widehat{p}_2$ and $\widehat{p}_3$ represent the probability of the link state being the LOS, first-order NLOS and the higher-order NLOS, respectively. To train the LSCN, we build a training set $\mathcal{D}=\{ \{(\widehat{\tau}^{d}_l, \widehat{\theta}^{d}_l, \widehat{\phi}^{d}_l, \mathbf{p}^{d}),
\mid l=1,2,...,K
\} \mid d=1,2..., |\mathcal{D}|\}$ in the considered urban outdoor environment, where $\mathbf{p}^{d}$ is the one-hot vector representing the ground truth of the link state. 
Note that, as in \cite{ny}, we normalize all the input values to the scale of $[-1,1]$ for ease of training convergence. We leverage the cross entropy function as the loss function $\mathcal{L}$ of the LSCN, i.e., 
\begin{align}
	\mathcal{L}(\mathbf{\Psi})=-(\mathbf{p}^d)^T\log (\widehat{\mathbf{p}}^d),
\end{align}
where $\widehat{\mathbf{p}}^d$ denotes the output of the LSCN with the input of the $d$-th training data.
The LSCN is trained with the stochastic gradient descent, i.e.,
\begin{align}
	\mathbf{\Psi} \leftarrow \mathbf{\Psi} -\alpha\nabla_\mathbf{\Psi}\mathcal{L}(\mathbf{\Psi}).
\end{align}
where $\alpha \in\mathbb{R}^+$ is the learning rate.

\subsubsection{Analytical Expression of 3D Point Cloud Coordinates}
At the time step when the link state is the first-order NLOS, we can analyze the geometry relationship of the strongest CL and calculate the reflection point on it as the cloud point.

\begin{strip}
	\hrule
	\begin{align}
		\centering
		\label{equ:geores_1}
		&		x_P=\frac{c^2\tau^2 + x_R^2 - x_G^2-(y_R-y_G)^2-(z_R-z_G)^2 + 2\big[(y_R-y_G)\tan({\phi})\;\textcolor{red}{\pm}\;(z_R-z_G)\cot({\theta})\sqrt{1+\tan^2({\phi})}\;\textcolor{red}{\pm}\; c\tau A\big]x_R
		}
		{2\big[\;\textcolor{red}{\pm}\; c\tau A+x_R-x_G\;\textcolor{red}{\pm}\; (z_R-z_G)\cot({\theta})\sqrt{1+\tan^2({\phi})}+(y_R-y_G)\tan(\phi)\big]},\\ 
		\label{equ:geores_2}
		&\qquad\qquad\qquad\qquad\qquad\qquad\qquad\qquad\qquad	y_P=y_R+\tan({\phi})(x_P-x_R), \\
		\label{equ:geores_3}
		&\qquad\quad\qquad\qquad\qquad\qquad\qquad\qquad\; z_P=z_R\;\textcolor{red}{\pm}\;\cot({\theta}) (x_P-x_R)\sqrt{1+\tan^2({\phi})},
	\end{align}
	\hrule
\end{strip}
For convenience, we omit the subscript $t$ in the position of the GMT and the UAV and directly use $\phi$, $\theta$ and $\tau$ to denote the estimated parameters of the strongest CL, $\widehat{\phi}_1$, $\widehat{\theta}_1$ and $\widehat{\tau}_1$ in the following calculations.
As shown in \reffig{fig:geometry}, the coordinate of the reflection point $[x_P, y_P, z_P]^T$ on the first-order NLOS CL satisfies
\begin{align}
	\label{equ:geo}
	\left\{
	\begin{aligned}
	&\frac{y_P-y_R}{x_P-x_R}=\tan ({\phi}),\\
	&\frac{z_P-z_R}{\sqrt{(x_P-x_R)^2+(y_P-y_R)^2}}=\tan(\frac{\pi}{2}-{\theta}),\\
	&\left\|[x_P, y_P,z_P]^T-[x_R, y_R, z_R]^T\right\|_2=c{\tau}-\\
	&\left\|[x_P, y_P,z_P]^T-[x_G, y_G, z_G]^T\right\|_2.
	\end{aligned}
	\right.
\end{align}
Recall that the GMT sends its position to the UAV at each time step, which makes $x_G$, $y_G$ and $z_G$ known variables. Besides, as the UAV locates itself and carries out path estimations from time to time, the position of UAV $[x_R,y_R, z_R]^T$ and the AOA ${\theta}$, ${\phi}$ and the time delay ${\tau}$ are all known variables. 
We derive the closed-form expressions of $x_P$, $y_P$ and $z_P$ based on these known variables as follows.
\begin{proposition}
	The closed-form expressions of the coordinate $[x_P,y_P,z_P]$ of the reflection point on the first-order NLOS CL are \eqref{equ:geores_1}, \eqref{equ:geores_2} and \eqref{equ:geores_3}, respectively, where $A=\sqrt{(1+\tan^2({\phi}))(1+\cot^2({\theta}))}$. 
\end{proposition}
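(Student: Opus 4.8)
The plan is to solve the system \eqref{equ:geo} directly, exploiting its special structure: the first two equations are already linear once the radical in the second is resolved, and the third becomes linear after a difference-of-squares manipulation. The one genuine idea needed is this linearization; everything else is substitution and bookkeeping.

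First I would dispatch the first two constraints of \eqref{equ:geo}. The first equation gives $y_P - y_R = \tan(\phi)(x_P - x_R)$ immediately, which is \eqref{equ:geores_2}. Substituting this into the radical $\sqrt{(x_P-x_R)^2+(y_P-y_R)^2}$ collapses it to $|x_P-x_R|\sqrt{1+\tan^2(\phi)}$, so the second constraint yields $z_P - z_R = \pm\cot(\theta)(x_P-x_R)\sqrt{1+\tan^2(\phi)}$, namely \eqref{equ:geores_3}; the sign ambiguity $\pm$ is exactly $\mathrm{sign}(x_P-x_R)$ arising from the absolute value. At this stage both $y_P$ and $z_P$ are affine functions of the single unknown $x_P$.

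Next I would reduce the third (range) constraint. Writing $d_R \triangleq \|P-R\|_2$ and $d_G \triangleq \|P-G\|_2$, the constraint reads $d_R + d_G = c\tau$. Plugging the two affine relations into $d_R$ gives $d_R^2 = (x_P-x_R)^2\big(1+\tan^2(\phi)\big)\big(1+\cot^2(\theta)\big) = (x_P-x_R)^2 A^2$, so $d_R = |x_P-x_R|\,A$; this is precisely where the constant $A$ of the statement emerges. The key trick is to avoid squaring $d_G$ in raw form: rewriting the constraint as $d_G = c\tau - d_R$ and squaring gives $d_G^2 - d_R^2 = c^2\tau^2 - 2c\tau d_R$. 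When the left-hand side is expanded coordinate-wise, all quadratic terms in $(x_P,y_P,z_P)$ cancel, leaving an expression linear in $x_P,y_P,z_P$. Substituting the affine forms of $y_P,z_P$ together with $d_R=|x_P-x_R|A$ then produces a single linear equation in $x_P$.

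Finally I would solve that linear equation. Collecting the coefficient of $x_P$ reproduces the bracket in the denominator of \eqref{equ:geores_1}, while the constant terms, after combining pairs such as $y_R^2 + y_G^2 - 2y_Gy_R$ into $-(y_R-y_G)^2$ (and likewise for $z$), assemble into the numerator of \eqref{equ:geores_1}; back-substitution into \eqref{equ:geores_2} and \eqref{equ:geores_3} completes the proof. I expect the main obstacle to be bookkeeping rather than conceptual: keeping the three $\pm$ signs mutually consistent (they are all tied to $\mathrm{sign}(x_P-x_R)$) and correctly condensing the constant terms into the squared differences. Without the difference-of-squares linearization in the third step the range equation would remain quadratic and would not admit the clean closed form claimed, so that step is the crux.
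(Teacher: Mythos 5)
Your proposal is correct and follows essentially the same route as the paper's proof: express $y_P$ and $z_P$ as affine functions of $x_P$ from the first two constraints, note $d_R = |x_P - x_R|\,A$, square the range equation, and solve the resulting linear equation for $x_P$. Your difference-of-squares framing ($d_G^2 - d_R^2$ is automatically linear in the coordinates) is just a cleaner packaging of the cancellation the paper verifies explicitly in \eqref{equ:derivation}, where the quadratic coefficient $1+\tan^2(\phi)+\cot^2(\theta)(1+\tan^2(\phi))-A^2$ is annotated as vanishing.
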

\begin{proof}
	See Appendix \ref{app:proof}.
\end{proof}
Note that in \eqref{equ:geores_1} and \eqref{equ:geores_3}, all $\textcolor{red}{\pm}$'s take the plus or minus sign synchronously. This means that there are two solutions to \eqref{equ:geo}. Specifically, when $x_P\ge x_R$, it takes the plus sign, and when $x_P< x_R$, it takes the minus sign. 
This gives us a way to filter the false solution, i.e.,  $\textcolor{red}{\pm}$ takes the plus sign if ${\phi}\in[-\frac{\pi}{2}, \frac{\pi}{2}]$ and takes the minus sign otherwise.
The UAV constantly calculate the coordinates of the reflection points on the strongest CL in the first-order NLOS path state, and the point clouds of the environment can be gradually built. Note that there is no distortion problem in generating the point clouds since the calculations are carried out in the world coordinate system.

\subsection{Overall C-SLAM Algorithm}
The overall C-SLAM algorithm is summarized in Algorithm \ref{algorithm:c-slam}. Note that the localization and mapping are carried out in the UAV simultaneously. Specifically, the UAV locates its position with GPS and uses the for correction every $T_c$ time steps. At each time step, the GMT transmits its position to the UAV though wireless signals, and the UAV carries out the DLM to build the point cloud of the environment.

\section{Simulation Results}
\label{sec:simulations}
\begin{figure}[t]
	\centering
	\includegraphics[width=80mm]{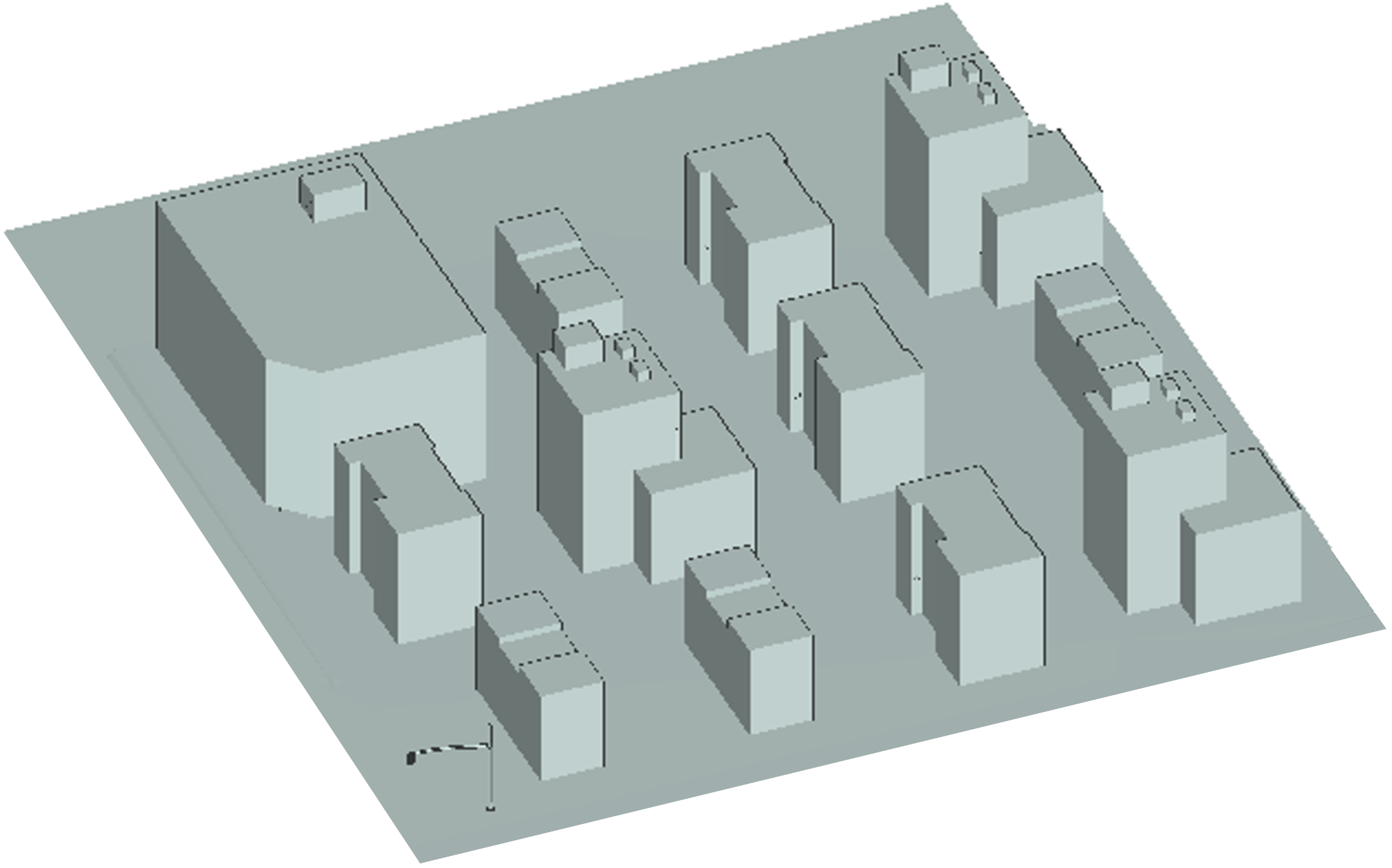}
	\caption{The considered $120$m$\times 120$m outdoor environment, consisting of a ground with multiple irregular buildings.}
	\label{fig:building}
\end{figure}

\begin{figure}[t]
	\centering
	\includegraphics[width=80mm]{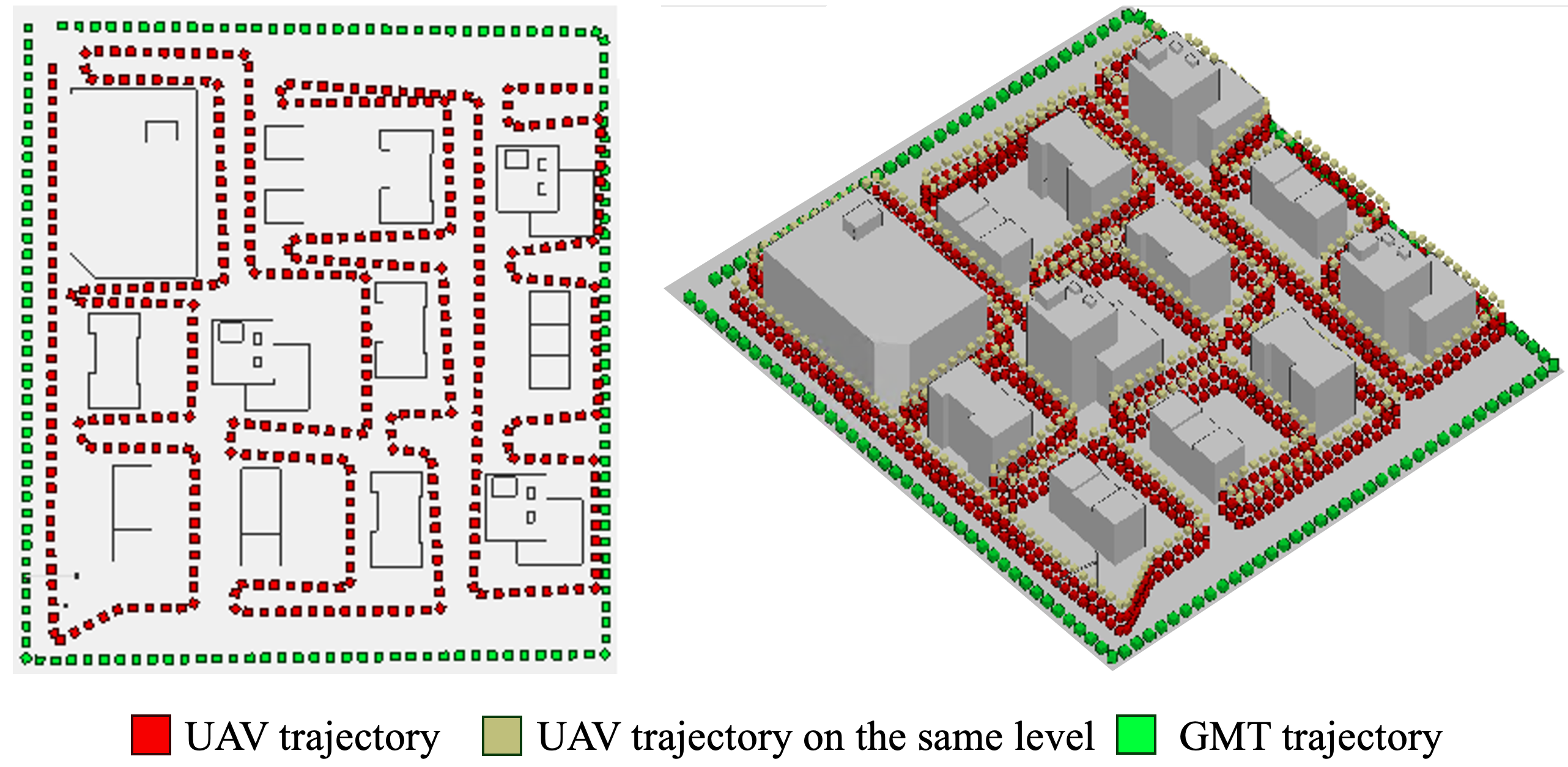}
	\caption{The 2D and 3D trajectories of the UAV and the GMT, where the GMT is moving on the ground around the buildings, and the UAV is flying between the buildings on different levels. }
	\label{fig:trajectory}
\end{figure}

\begin{figure}[t]
	\centering
	\includegraphics[width=80mm]{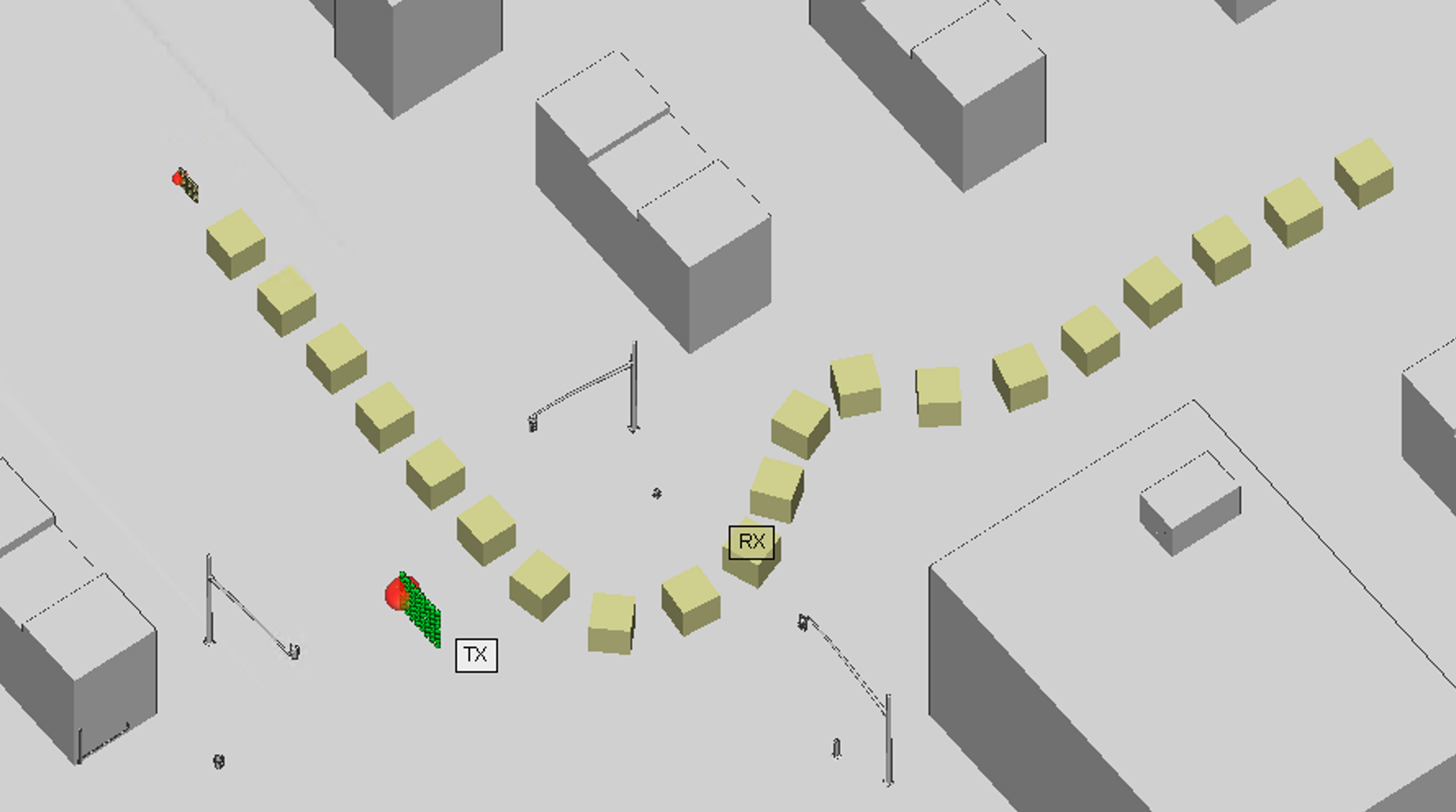}
	\caption{The arrays of the GMT and the UAV.}
	\label{fig:mimo}
\end{figure}
We consider a $120$m$\times 120$m$\times 20$m outdoor environment consisting of a ground and multiple irregular 3D buildings, as shown in Fig. \ref{fig:building}. The model of the considered environment is built in Blender, an open-source 3D computer graphics software tool set.
As shown in Fig. \ref{fig:trajectory}, the GMT is moving on the ground around the buildings, and the UAV is flying between the buildings on different levels. 
The ground mobile transmitter (GMT) and the UAV are each equipped with a $30$GHz antenna array, forming a mmWave communication system.
The parameters for these two arrays are summarized in Table. \ref{table:array}. 
Note that to balance the array gain and the hardware implementation cost, we adopt the hybrid beamforming technique \cite{HBF} for both arrays.
{Fig. shows the beam patterns of the arrays. To generate the wireless signals in the considered environment, we leverage the Wireless InSite by Remcom \cite{remcom}. Wireless InSite is a professional suite of RF propagation models, providing 3D ray-tracing and fast ray-based methods, that has been widely used in academia and industry.}
Without loss of generality, we neglect the difference of the materials of the surface of the ground and all buildings and model them all as one-layer dielectric with $15 F/m$ permittivity and $0.015 S/m$ conductivity. The calculation of the ray-tracing is accelerated by the NVIDIA GeForce RTX 2080 SUPER. 
\begin{table}[h]
	\centering
	\caption{Parameters of the antenna arrays on the GMT and the UAV.}
	\label{table:array}
	\begin{tabular}{c|c|c}
		\hline
		\rowcolor[gray]{0.9}
		\small\textbf{{Parameters}}&\small\textbf{{GMT}}&\small\textbf{{UAV}}\\
		\hline
		\small Carrier Frequency&\multicolumn{2}{c}{\small $30$GHz}\\
		\hline
		\small Bandwidth&\multicolumn{2}{c}{\small $200$MHz}\\
		\hline
		\small Antenna Element Type & \multicolumn{2}{c}{\small half-wave dipole}\\
		\hline
		\small Waveform & \multicolumn{2}{c}{\small Sinusoid}\\
		\hline
		\small Array Size & \small $64$ ($8 \times 8$ UPA) &\small $8$ ($1\times 8$ ULA)\\
		\hline
			\small Array Number& \small $3$ &\small  $4$\\
		\hline 
		\small Transmit Power & \small $30$ dBm &\small  $-$\\
		\hline 
		
	\end{tabular}
\end{table}
\begin{figure}[t]
	\centering
	\includegraphics[width=80mm]{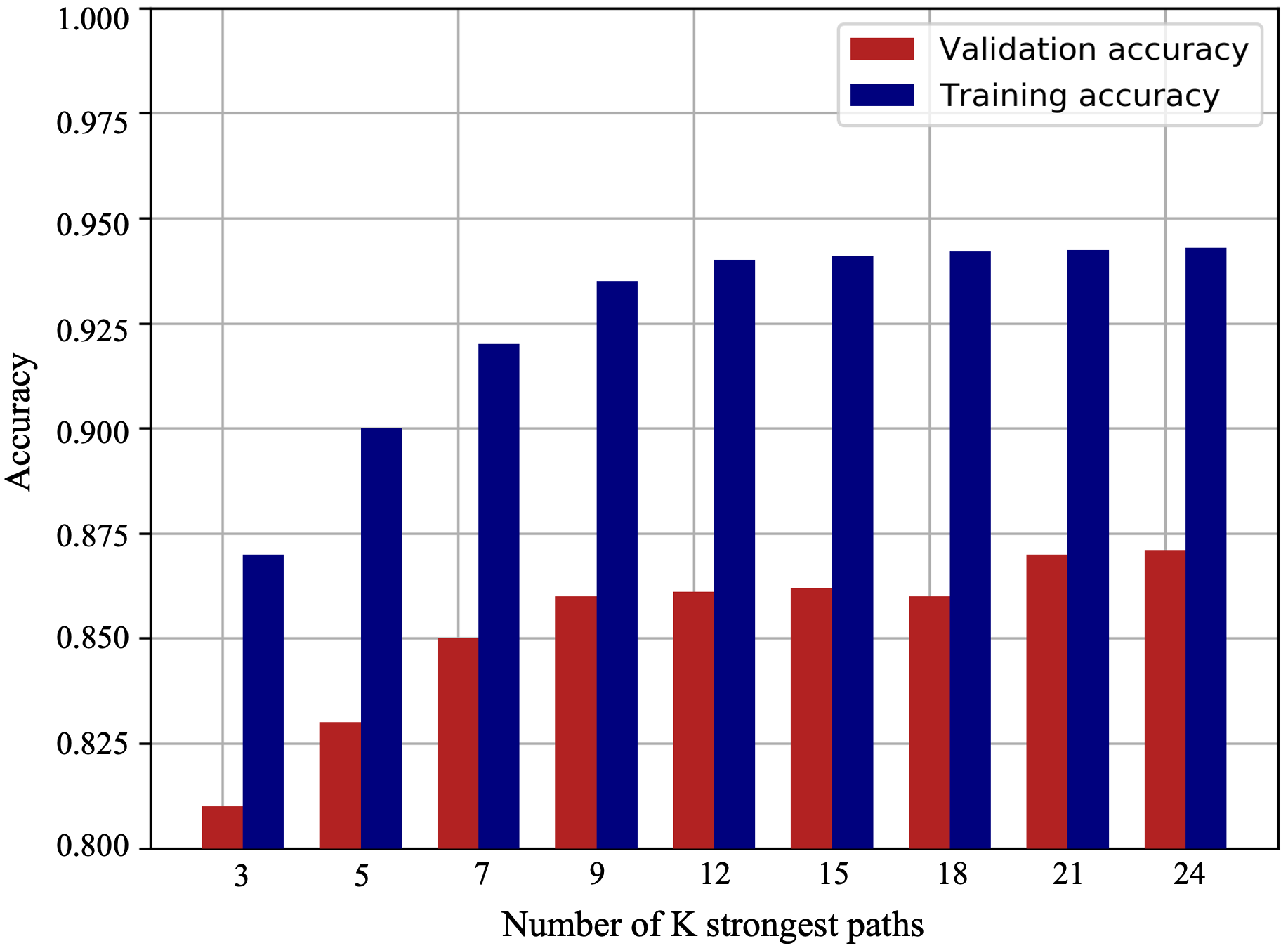}
	\caption{The training and validation accuracy verse the number of strongest paths $K$.}
	\label{fig:K}
	
\end{figure}
\begin{figure}
	\centering
	\includegraphics[width=75mm]{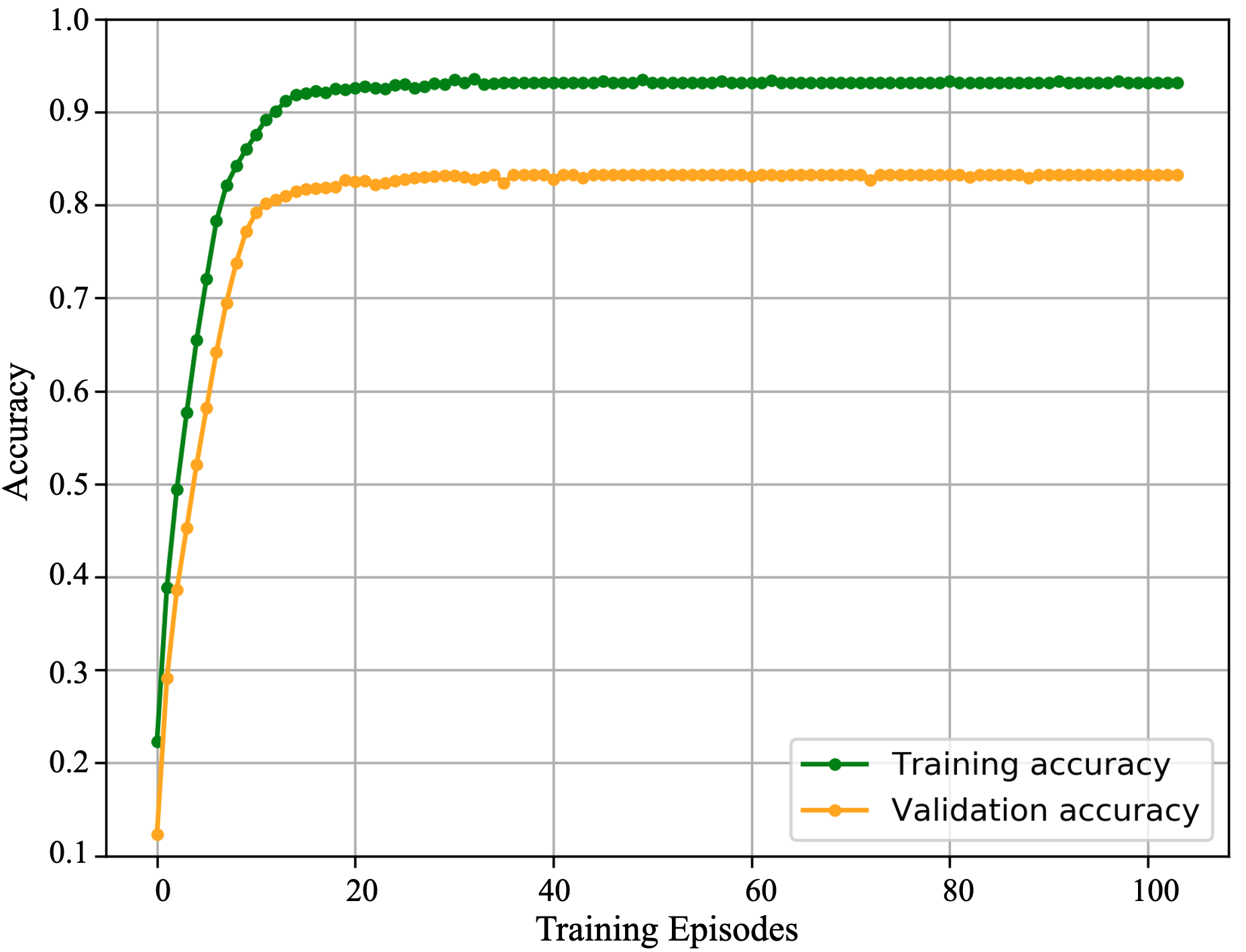}
	\caption{The training and validation accuracy curves when $K=9$.}
	\label{fig:accuracy_curve}
	
\end{figure}
\begin{figure}
	\centering
	\includegraphics[width=80mm]{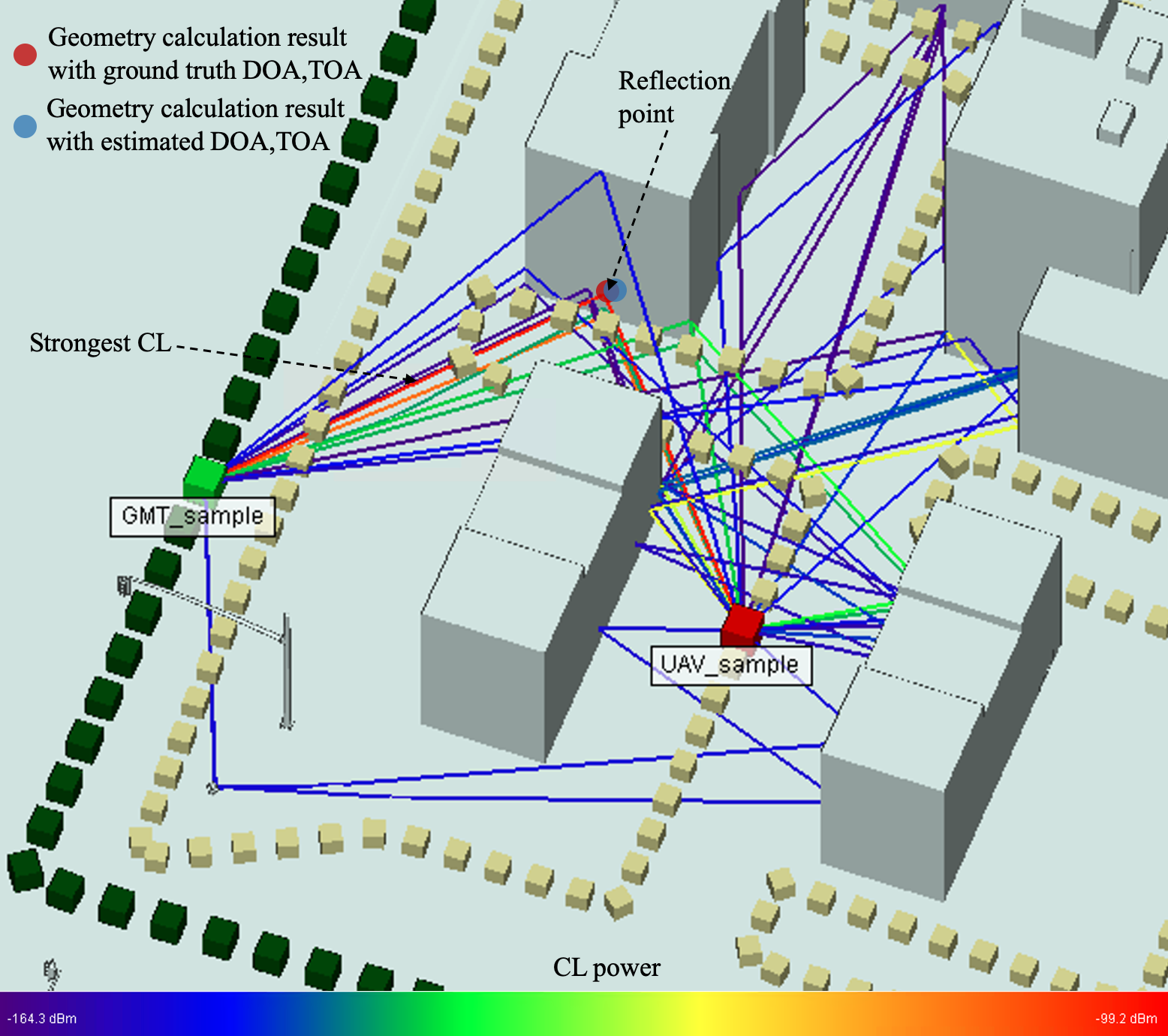}
	\caption{The multi-path propagation between the GMT and the UAV in the considered outdoor environment. The color of each CL represents its power, where the redder the color of the CL, the stronger the power of it. Here the link state is a first-order NLOS CL}
	\label{fig:reflections}
\end{figure}

We conduct extensive experiments to validate the effectiveness of our approach. Specifically, we focus on the following three questions in our experiments: (1) How does the link-state classification neural network perform? (2) What is the performance and time complexity of the geometric point cloud generation algorithm? (3) What is the performance of the overall C-SLAM algorithm?
\subsection{Link-state Classification Results}
\emph{To answer question (1):} We construct the training and validation set for the LSCN in the considered outdoor environment. Specifically, we place $120\times 120$ uniform X-Y grid RXs that have the same array with the UAV in $z=1\text{m}, 2\text{m},...,20\text{m}$ plane, which makes up a total of $288,000$ RX's. Note that each RX is $1$m apart from the adjacent RXs.  We place 
$100$ TXs that have the same array with the GMT in $100$ different locations. The resulting communication links between the TXs and RXs makes up the training set for the LSCN. Similarly, we place $50$ TXs that have the same array with the GMT in other different $50$ locations to generate the validation set. Recall that the input of the LSCN is the path estimations of $K$ strongest paths, which makes the input dimension be $3K$. We design the LSCN as a fully connected neural network with three hidden layers, where the number of neuron units are $10,50,100$, respectively. The output layer has $3$ neuron units processed by the  softmax function, which represent the probability of being LOS, 1st-order NLOS and higher-order NLOS. The LSCN is implemented by pyTorch, and its hyper-parameters are summarized in Table. \ref{table:LSCN}. The training and validation accuracies under different values of $K$ are shown in Fig. \ref{fig:K}. We can see that the training and validation accuracies both increase with the number of $K$. This is consistent with the common sense since larger $K$ involves more paths and brings more information to the LSCN. However, the growth rates of the accuracies largely slow down when $K>9$. Hence, to balance between the accuracy and the time complexity, we choose $K$ to be $9$. When $K=9$, the loss and accuracy curves during training are shown in Fig. \ref{fig:accuracy_curve}. We can see that the classification accuracies on the training and validation set both rises with the training episodes and converge to $0.935$ and $0.830$, respectively.

\begin{table}[h]
	\centering
	\caption{Hyper-parameters of the LSCN.}
	\label{table:LSCN}
	\begin{tabular}{c|c|c|c}
		\hline
		\rowcolor[gray]{0.9}
		\makecell[c]{\small\textbf{{Hyper-}}\\\textbf{{parameters}}}
	&\small\textbf{{LSCN}}&		\makecell[c]{\small\textbf{{Hyper-}}\\\textbf{{parameters}}}&\small\textbf{{LSCN}}\\
		\hline
		\small Input units& $3K$&	\small Optimizer& Adam\\
		\hline
		\small Hidden layer& $[10, 50, 100]$&	\small Learning rate& $\sim 0.001$\\
	\hline
		\small Output units& $3$&	\small Batch size& $1000$\\
	\hline
		
	\end{tabular}
\end{table}

\subsection{3D Point Clouds Generation Results}

\emph{To answer question (2):} We sample $10,000$ first-order path state and calculate the coordinates of the reflection points with the geometric point cloud generation algorithm. Notably, we here conduct two experiments. In the first experiment, we estimate the AOA and TOA of the strongest CLs using the LTD method. In the second experiment, we directly use the AOA and TOA given by the Wireless InSite software, which can be viewed as the ground truth. The mean square errors (MSE) between the coordinates of the generated point clouds and the coordinates of the real reflection points are shown in Table. \ref{table:MSE}. We can see that the MSE in the second experiment is nearly zero, smaller than that in the first experiment. This means that the geometric solutions in \eqref{equ:geores_1}, \eqref{equ:geores_2} and \eqref{equ:geores_3} are correct, and the point cloud generation error comes from the error of path estimations. Nonetheless, the MSE in the first experiment is about $0.2$m, which is rather small. Fig. \ref{fig:reflections} shows the top $25$ strongest CLs between the GMT and the UAV when they are at positions $[28.20, 23.04, 2]^T$ and $[53.97, 23.24,2]^T$, respectively. We can see that the path state is the first-order NLOS since there is no direct CLs from the GMT to the UAV and the strongest CL has only one reflection. The real coordinate of the reflection point of the strongest CL is $[41.59, 39.09, 2]^T$, and the geometry calculation results with ground truth and estimated AOA and TOA are  $[41.59, 39.09, 2]^T$ and $[41.68, 39.18, 2.01]^T$, respectively. We can see that the former result is nearly identical to the coordinate of the real reflection point and the latter result only has $0.12$m away from it, which validates the effectiveness of the geometric point cloud generation algorithm.

\begin{table}[h]
	\centering
	\caption{MSE between the coordinates of the generated point clouds and the coordinates of the real reflection points.}
	\label{table:MSE}
	\begin{tabular}{c|c|c}
		\specialrule{0em}{1pt}{2pt}
		\hline
		\small Experiments&\small\makecell[c]{{Experiment 1:}\\{Estimated AOA and TOA}\\{by the LTD}}& 	\small\makecell[c]{{Experiment 2:}\\{Using ground truth}\\ {AOA and TOA}}\\
		\hline
		\small{{MSE} (m)}&$0.2112$&$4.65\times 10^{-3}$\\
		\hline

	\end{tabular}
\end{table}

\subsection{Overall Results}

\begin{figure}
	\centering
	\includegraphics[width=80mm]{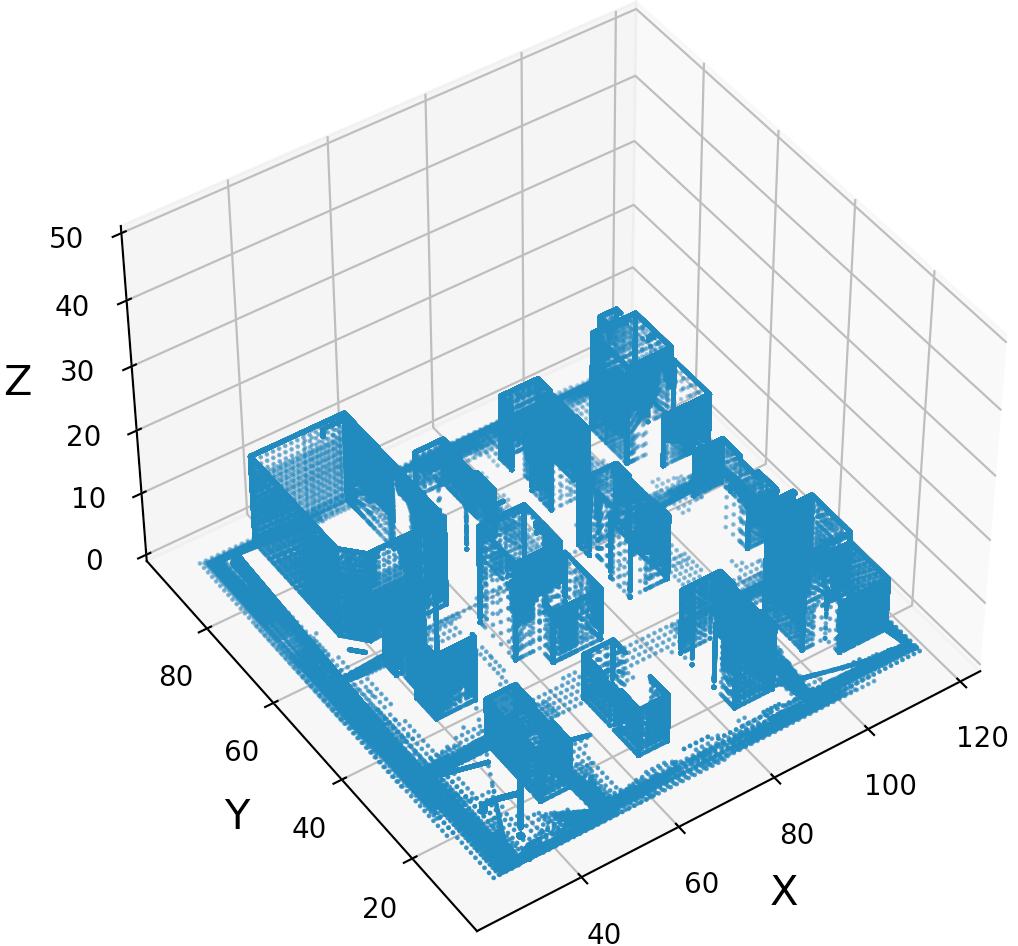}
	\caption{The ground truth of the considered outdoor environment generated directly with the reflections points given by the Wireless Insite.}
	\label{fig:outdoor_ground_truth}
\end{figure}
\begin{figure}
	\centering
	\includegraphics[width=80mm]{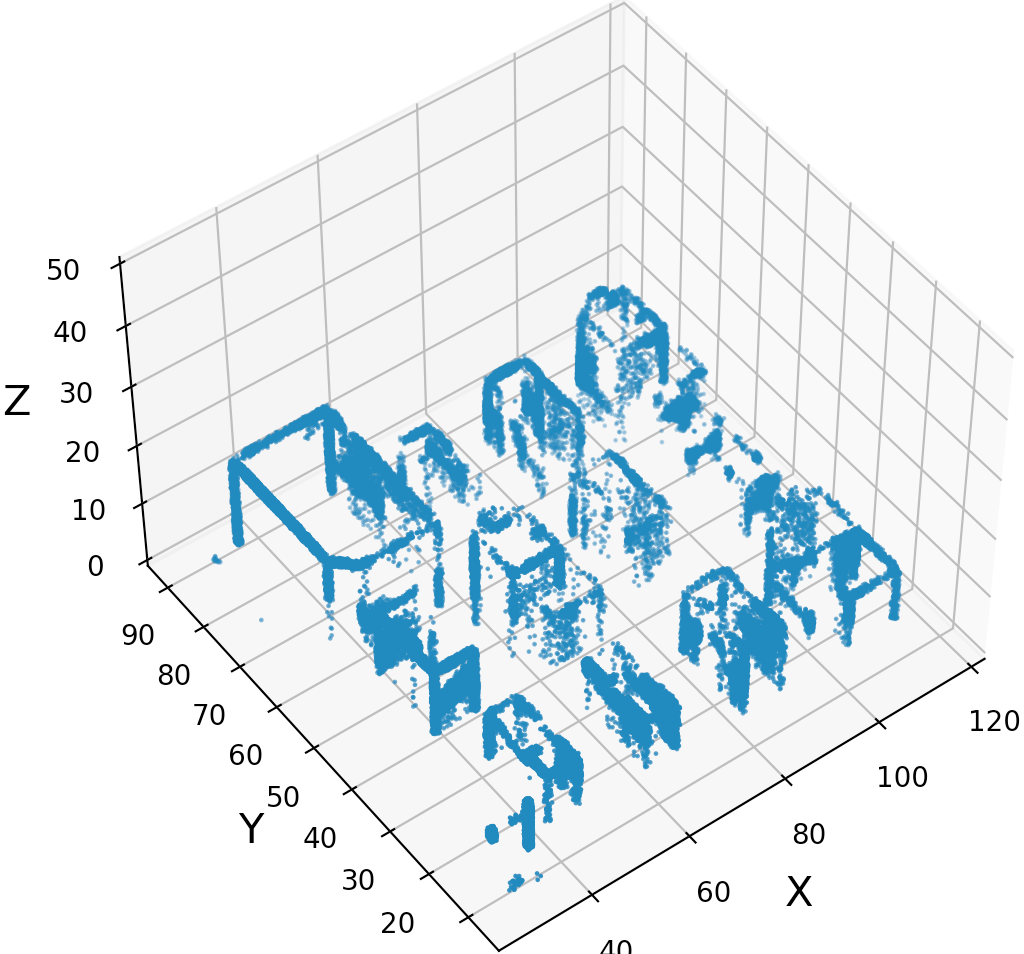}
	\caption{The mapping results of the considered outdoor environment with the C-SLAM.}
	\label{fig:outdoor}
\end{figure}

\begin{figure}
	\centering
	\includegraphics[width=80mm]{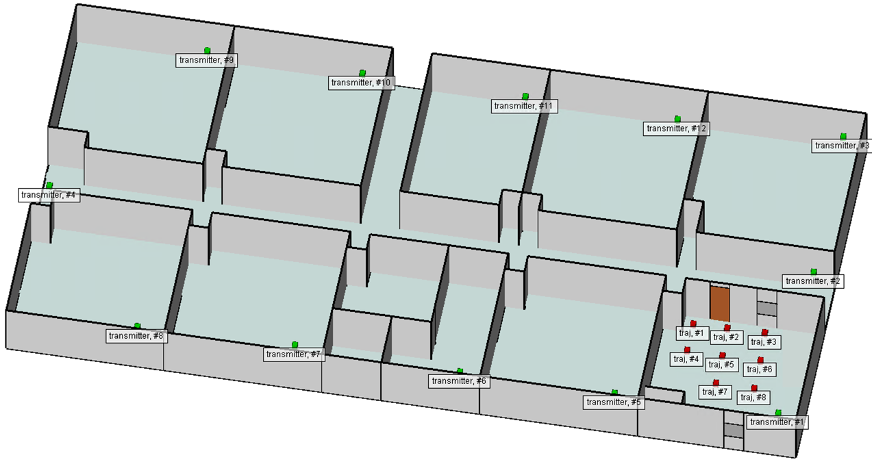}
	\caption{The ground truth of the considered indoor environment.}
	\label{fig:indoor_ground_truth}
\end{figure}

\emph{To answer question (3):} We apply the C-SLAM algorithm to the considered outdoor environment in Fig. \ref{fig:building}, and the map constructed directly using the reflections points given by the Wireless Insite, acting as the ground truth, is shown in Fig. \ref{fig:outdoor_ground_truth}, and the mapping result of the C-SLAM is shown in Fig. \ref{fig:outdoor}. We can see that the C-SLAM is able to build the 3D point cloud map of the considered complex outdoor environment. Notably, the resulted 3D point cloud map contains more details compared with feature-based maps. 
The MSE of the mapping result compared to the ground truth is about $0.3211$m, which is small relative to the whole outdoor environment size. 
We also apply the C-SLAM algorithm to an indoor environment shown in Fig. \ref{fig:indoor_ground_truth}, and the resulted 3D point cloud map is shown in Fig. \ref{fig:indoor}. 
This validates the effectiveness of the C-SLAM in the indoor environment.
\begin{figure}
	\centering
	\includegraphics[width=80mm]{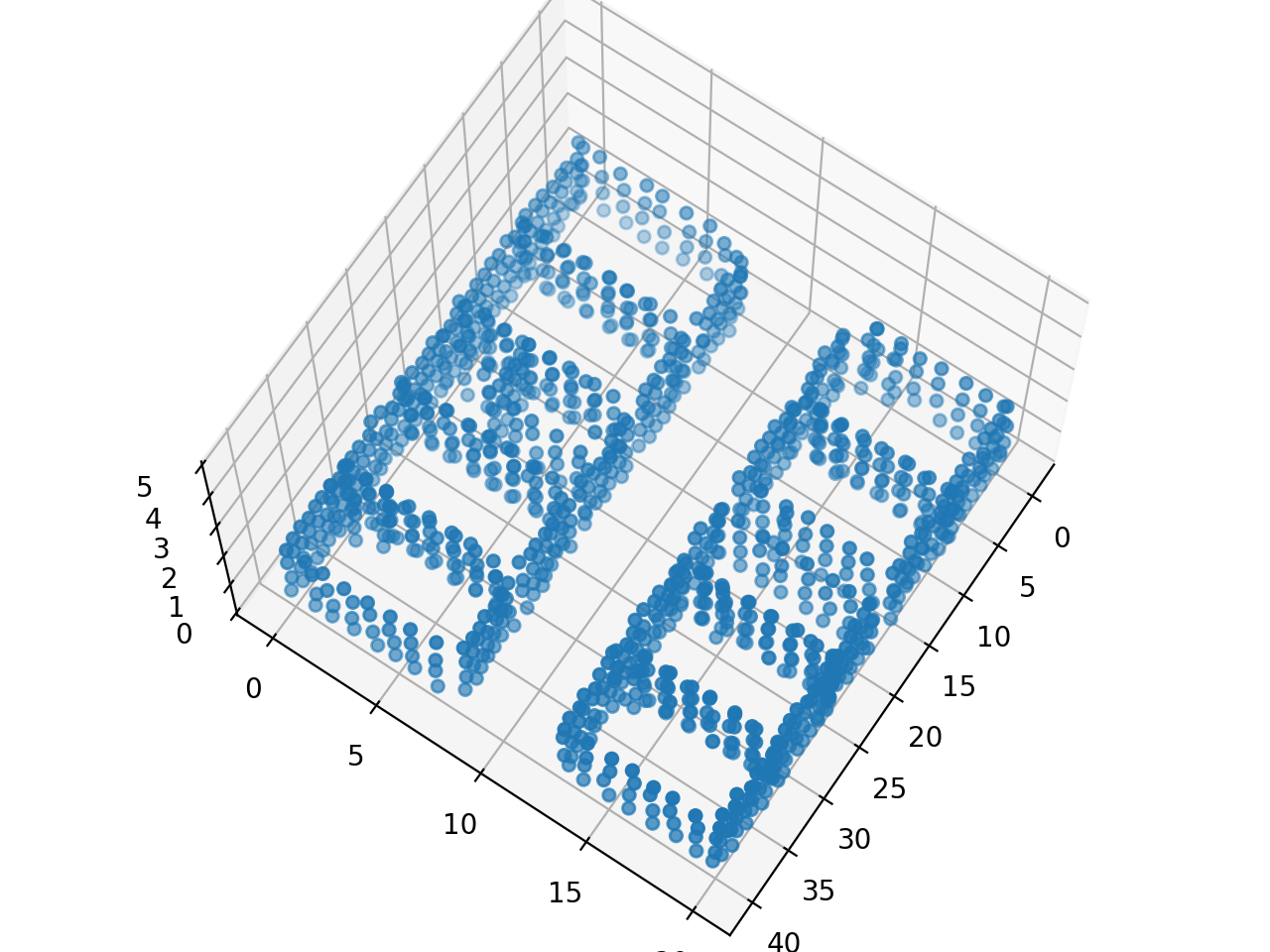}
	\caption{The mapping results of the considered indoor environment with the C-SLAM.}
	\label{fig:indoor}
\end{figure}
\begin{figure}
	\centering
	\includegraphics[width=65mm]{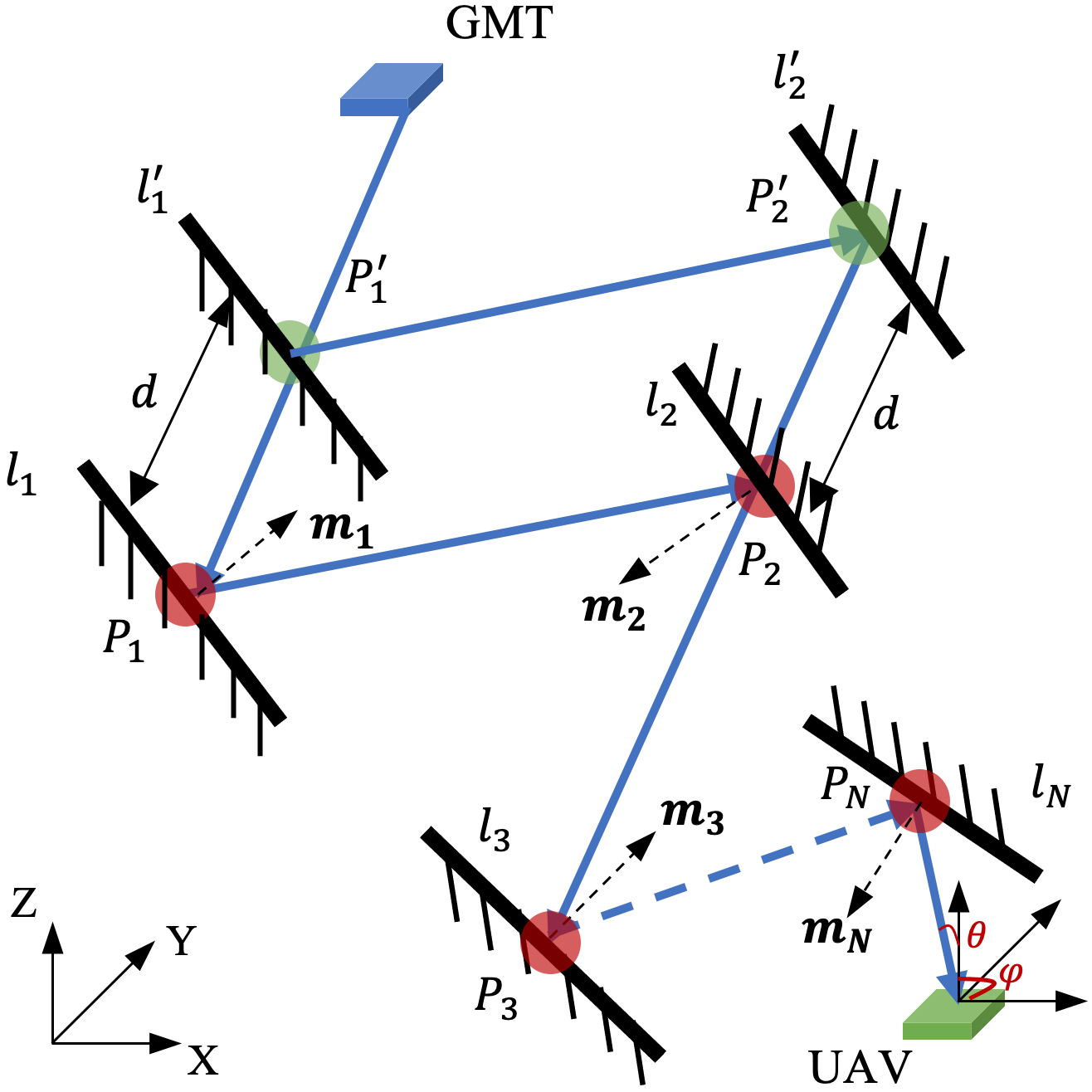}
	\caption{A higher-order NLOS CL from the GMT to the UAV through two or more reflections, where the first two reflection surface $l_1$ and $l_2$ are parallel to each other. Note that when $l_1$ and $l_2$ move the same distance $d$ to $l_1'$ and $l_2'$, the new reflection points $P_1'$ and $P_2'$ also satisfy . As the moving distance $d$ can be an arbitrarily positive number, there are infinite solutions to \eqref{equ:geo_2}.}
	\label{fig:two_reflections}
\end{figure}
\section{Conclusions}
\label{sec:conclusion}
We study the 3D SLAM problem in complex outdoor and indoor environments based only on mmWave wireless communication signals. 
Firstly, we propose a DLM algorithm 
that can leverage the reflections point on the first-order NLOS CLs to build the 3D point cloud map of the environment. 
Secondly, we take the advantage of both the inertial measurement unit and the beam-squint assisted localization method to realize real-time and precise localizations.
Then, combining the DLM and the adopted localization algorithm, we develop the C-SLAM framework that can carry out SLAM without any prior knowledge on the environment. Moreover, extensive simulations on both complex outdoor and indoor environments validate the effectiveness of our approach.
\appendix
\subsection{Proof of Proposition 1}
\label{app:proof_pro_1}
As shown in Fig. \ref{fig:two_reflections}, consider a higher-order NLOS CL that propagates from the GMT to the UAV through $N\in\mathbb{N}_+$ reflections, where $N\ge 2$. Let the $i$-th reflection point and reflection surface be $P_i$ and $l_i$, respectively, and denote the coordinate of $P_i$ and the normal vector of $l_i$ as $[x_i,y_i,z_i]^T$ and $\mathbf{m}_i$, respectively, $i\in\{1,2,...,N\}$. 
For convenience, we here use $[x_0,y_0,z_0]^T$ and $[x_{N+1},y_{N+1},z_{N+1}]^T$ to represent the coordinates of the GMT and the UAV, respectively, and let points $P_0$ and $P_{N+1}$ be the GMT and the UAV, respectively.
We can then derive the geometry relationship equations as \eqref{equ:geo_2}.
Note that all the coordinates of the reflections points $[x_i,y_i,z_i]$ and the normal vectors $\mathbf{m}_i$ are unknown, and hence, the total number of unknown variables is $6N$. 
In addition, the

\vspace{-4mm}
\begin{strip}
	\hrule
\begin{align}
	\label{equ:geo_2}
	\left\{
	\begin{aligned}
		&\;\;\;\sum_{i=0}^{N} || |\underbrace{[x_{i+1},y_{i+1},z_{i+1}]^T-[x_i,y_i,z_i]^T}_{=\overrightarrow{P_{i}P_{i+1}}}||_2=c\tau,\\
	&<\underbrace{\frac{[x_i,y_i,z_i]^T-[x_{i-1},y_{i-1},z_{i-1}]^T}{\left\|[x_i,y_i,z_i]^T-[x_{i-1},y_{i-1},z_{i-1}]^T\right\|_2}}_{=\overrightarrow{P_{i-1}P_{i}}/\left\|\overrightarrow{P_{i-1}P_{i}}\right\|_2}
		,\mathbf{m}_i> \;=\; <\underbrace{\frac{[x_i,y_i,z_i]^T-[x_{i+1},y_{i+1},z_{i+1}]^T}{\left\|[x_i,y_i,z_i]^T-[x_{i+1},y_{i+1},z_{i+1}]^T\right\|_2}}_{=\overrightarrow{P_{i+1}P_{i}}/\left\|\overrightarrow{P_{i+1}P_{i}}\right\|_2}
			,\mathbf{m}_i>, \;\forall i,\\
		&	<\frac{1}{2}\bigg[\underbrace{\frac{[x_{i-1},y_{i-1},z_{i-1}]^T-[x_i,y_i,z_i]^T}{\left\|[x_{i-1},y_{i-1},z_{i-1}]^T-[x_i,y_i,z_i]^T\right\|_2}}_{=\overrightarrow{P_{i}P_{i-1}}/\left\|\overrightarrow{P_{i}P_{i-1}}\right\|_2}+ 	\underbrace{\frac{[x_{i+1},y_{i+1},z_{i+1}]^T-[x_i,y_i,z_i]^T}{\left\|[x_{i+1},y_{i+1},z_{i+1}]^T-[x_i,y_i,z_i]^T\right\|_2}\bigg]}_{=\overrightarrow{P_{i}P_{i+1}}/\left\|\overrightarrow{P_{i}P_{i+1}}\right\|_2},\mathbf{m}_i>\;=1, \;\forall i\\
		&\;\;\;\left\|\mathbf{m}_i\right\|_2=1, \;\forall i,\\
		&\;\;\; y_N-y_{N+1}=\tan ({\phi})(x_N-x_{N+1}),\\
		&\;\;\; z_{N}-z_{N+1}=\tan\bigg(\frac{\pi}{2}-{\theta}\bigg){\sqrt{(x_N-x_{N+1})^2+(y_N-y_{N+1})^2}},\\
	\end{aligned}
	\right.
\end{align}
	\hrule
\end{strip}
\noindent 
number of equations in \eqref{equ:geo_2} is $3N+3$. 
As $3N+3<6N$ when $N\ge 2$, there are infinite possible solutions to \eqref{equ:geo_2}. Therefore, the coordinates of reflections points on the higher-order NLOS CL cannot be uniquely determined. 
For example, as shown in Fig. \ref{fig:two_reflections} where $l_1$ is parallel to $l_2$, we can translate them along $\overrightarrow{P_{1}P_{0}}$ and $\overrightarrow{P_{2}P_{1}}$ a certain distance $d>0$ at the same time. The reflection points will change from $P_1$ and $P_2$ to $P_1'$ and $P_2'$, respectively. However, $P_1'$ and $P_2'$, together with $P3,...,P_N$ and $\mathbf{m}_1,...,\mathbf{m}_N$, still satisfy \eqref{equ:geo_2}. As $d$ is arbitrary, the number of solutions to \eqref{equ:geo_2} is infinite.

\subsection{Proof of Proposition 2}
\label{app:proof}
From the first equation in \eqref{equ:geo}, we can represent $y_P$ by $x_P$, i.e.,
\begin{align}
	\label{equ:y_P}
	y_P=(x_P-x_R)\tan(\phi)+y_R.
\end{align}
Substitute $y_P$ in the second equation in \eqref{equ:geo} by \eqref{equ:y_P}, we can represent $z_P$ by $x_P$, i.e.,
\begin{align}
	\label{equ:z_P}
	z_P = z_R\;\textcolor{red}{\pm}\;\cot(\theta)(x_P-x_R)\sqrt{1+\tan^2(\phi)}.
\end{align}
Then, substitute $y_P$ and $z_P$ in the third equation in \eqref{equ:geo} by \eqref{equ:y_P} and \eqref{equ:z_P}, we can derive the closed-form expression of $x_P$ as shown in \eqref{equ:derivation}, where $A\triangleq \sqrt{(1+\cot^2(\theta))(1+\tan^2(\phi))}$.
Moreover, we can derive the closed-form expression of $y_P$ and $z_P$ based on the closed-form expression of $x_P$. 
Notably, all $\textcolor{red}{\pm}$'s take the plus or minus sign synchronously, which means that there are two solutions to \eqref{equ:geo}. Specifically, when $x_P\ge x_R$, it takes the plus sign, and when $x_P< x_R$, it takes the minus sign.

\begin{strip}
	\hrule
	\begin{align}
		\label{equ:derivation}
	&	\sqrt{(x_P-x_G)^2+\big(y_R+(x_P-x_R)\tan(\phi)-y_G\big)^2+\big(z_R\;\textcolor{red}{\pm}\;\cot(\theta)(x_P-x_R)\sqrt{1+\tan^2(\phi)}-z_G\big)^2}\notag\\
		&= c\tau  - \bigg(\;\textcolor{red}{\pm}\;(x_P-x_R)\underbrace{\sqrt{1+\tan^2(\phi)+\cot^2(\theta)(1+\tan^2(\phi))}}_{=[(1+\tan^2(\phi))(1+\cot^2(\theta))]^{1/2}\triangleq A}\bigg) 
		\notag\\
		\Rightarrow\qquad & x^2_P-2x_Px_G+x^2_T+(y_R-y_G)^2+2(y_R-y_G)(x_P-x_R)\tan(\phi)+(x_P-x_R)^2\tan^2(\phi)+(z_R-z_G)^2\notag\\
		&\;\textcolor{red}{\pm}\;2\cot(\theta)(x_P-x_R)(z_R-z_G)\sqrt{1+\tan^2(\phi)} + \cot^2(\theta)(x_P-x_R)^2(1+\tan^2(\phi))\notag\\
		&=c^2\tau^2\;\textcolor{red}{\mp}\; 2c\tau A(x_P-x_R) + A^2(x_P-x_R)^2\notag\\
		\Rightarrow\qquad & \frac{1}{2}\underbrace{\bigg(1+\tan^2(\phi)+\cot^2(\theta)(1+\tan^2(\phi))-A^2\bigg)}_{=0 \textit{, (no quadratic form)}}x_P^2+\notag\\
		& \bigg((y_R-y_G)\tan(\phi)-x_G+x_R\underbrace{\big(A^2-\tan^2(\phi)-\cot^2(\theta)(1+\tan^2(\phi))\big)}_{=1}\textcolor{red}{\pm}\cot(\theta)(z_R-z_G)\sqrt{1+\tan^2(\phi)}	\textcolor{red}{\pm}c\tau A\bigg)x_P\notag\\
		&=\frac{1}{2}\bigg[c^2\tau^2 \;\textcolor{red}{\pm}\; 2c\tau Ax_R +\underbrace{A^2x_R^2-x_R^2\tan^2(\phi)-\cot^2(\theta)x_R^2(1+\tan^2(\phi))}_{x^2_R}-x^2_T-(y_R-y_G)^2-(z_R-z_G)^2\notag\\
		&+2(y_R-y_G)x_R\tan(\phi)
		\;\textcolor{red}{\pm}\;2\cot(\theta)x_R(z_R-z_G)\sqrt{1+\tan^2(\phi)}\bigg]
		\notag\\
		\Rightarrow\qquad&
		x_P=\frac{c^2\tau^2 + x_R^2 - x_G^2-(y_R-y_G)^2-(z_R-z_G)^2 + 2\big[(y_R-y_G)\tan(\phi)\;\textcolor{red}{\pm}(z_R-z_G)\cot(\theta)\sqrt{1+\tan^2\phi}\;\textcolor{red}{\pm} c\tau A\big]x_R
		}
	{2\big[\;\textcolor{red}{\pm}\; c\tau A+x_R-x_G\;\textcolor{red}{\pm}\; (z_R-z_G)\cot(\theta)\sqrt{1+\tan^2(\phi)}+(y_R-y_G)\tan(\phi)\big]},
	\end{align}
\hrule
\end{strip}

\end{document}